\newtheorem{thm}{Theorem}[section]
\newtheorem{lma}[thm]{Lemma}
\newtheorem{cor}[thm]{Corollary}
\newtheorem{conj}[thm]{Conjecture}
\newtheorem{obs}[thm]{Observation}
\newtheorem{rmk}[thm]{Remark}
\DeclareMathOperator{\xsc}{\chi_{\textrm{SC}}}
\DeclareMathOperator{\isc}{\chi_{\textrm{ISC}}}
\begin{document}

\title{The Interactive Sum Choice Number of Graphs\footnote{An extended abstract of this paper was accepted to Eurocomb'17.}}
\author{Marthe Bonamy\footnote{CNRS, LaBRI, Universit\'e de Bordeaux, France. Email: marthe.bonamy@u-bordeaux.fr} \and Kitty Meeks\footnote{School of Computing Science, University of Glasgow, United Kingdom. Email: kitty.meeks@glasgow.ac.uk.}}

\date{}

\maketitle

\parindent=0pt
\parskip=6pt

\begin{abstract}
We introduce a variant of the well-studied sum choice number of graphs, which we call the \emph{interactive sum choice number}.  In this variant, we request colours to be added to the vertices' colour-lists one at a time, and so we are able to make use of information about the colours assigned so far to determine our future choices.  The interactive sum choice number cannot exceed the sum choice number and we conjecture that, except in the case of complete graphs, the interactive sum choice number is always strictly smaller than the sum choice number.  In this paper we provide evidence in support of this conjecture, demonstrating that it holds for a number of graph classes, and indeed that in many cases the difference between the two quantities grows as a linear function of the number of vertices.
\end{abstract}

\section{Introduction}

The \emph{choice number} of a graph $G$ is the minimum length of colour-list that must be assigned to each vertex of $G$ so that, regardless of the choice of colours in these lists, there is certain to be a proper colouring of $G$ in which every vertex is coloured with a colour from its list.  A small subgraph of $G$ which is, in some sense, ``hard'' to colour, can therefore force the choice number for $G$ to be large, even if most of the graph is ``easy'' to colour.  The \emph{sum choice number} of $G$ (written $\xsc(G)$), introduced by Isaak \cite{isaak02}, captures the ``average difficulty'' of colouring a graph: each vertex can now be assigned a different length of colour-list, and the aim is to minimise the sum of the list lengths (while still guaranteeing that there will be a proper list colouring for any choice of lists). A long odd cycle is an example of a graph where most of the graph is “easier” to colour than the choice number indicates. 

For any graph $G = (V,E)$, we have $\xsc(G) \leq |V| + |E|$: we can order the vertices arbitrarily and assign to each vertex $d^-(v)+1$ colors, where $d^-(v)$ is the number of neighbours of $v$ that are before it in the order, and colour greedily in this order.  Graphs for which this so-called greedy bound is in fact equal to the sum choice number are said to be \emph{sc-greedy}, and one of the main topics for research into the sum choice number has been the identification of families of graphs which are (or are not) sc-greedy; we discuss known results about the sum choice number in more detail in Section \ref{sec:sum-choice}. 

In this paper we introduce a variation of the sum choice number, called the \emph{interactive sum choice number} of $G$, in which we do not have to determine in advance all of the lengths of the colour lists: at each step we ask for a new colour to be added to the colour list for some vertex of our choosing and, depending on what colours have been added to lists so far, we can adapt our strategy.  Formally, we define this quantity in terms of a game played by Alice and Bob, which takes a graph $G$ as input.  Initially, each vertex in $G$ has an empty colour-list; at each round, Alice chooses a vertex $v$ and Bob must add a colour (that does not already appear in the list) to the list at $v$.  The game terminates when $G$ admits a proper list colouring; Alice seeks to minimise the number of rounds, while Bob aims to maximise this quantity.  The interactive sum choice number of $G$, written $\isc(G)$, is the number of rounds when both Alice and Bob play optimally on $G$.

It is clear that $\isc(G) \leq \xsc(G)$ for any graph $G$, as Alice can simply ask for the appropriate number of colours to be added to the list for each vertex without paying any attention to the colours that have been added so far.  The natural question is then whether we are in fact able to improve on the sum choice number of $G$ by exploiting partial information about the colour lists.  A simple example where we can improve on the sum choice number is the three-vertex path $P_3$, whose sum choice number is known to be $5$.  In the interactive setting, Alice can start by requesting one colour for each vertex, and then adapt her strategy based on these initial colours: if the two endpoints are given the same first colour, she can obtain a proper colouring by requesting just one more colour for the middle vertex; otherwise, at most one of the endpoints, say $v$, has the same colour as the middle vertex, and so she obtains a proper colouring by requesting an additional colour for $v$.  Thus she can always obtain a proper colouring with at most four requests, a strict improvement on the sum choice number.

However, interactivity does not always allow us to improve on the sum choice number.  For a counterexample, let $G = (V,E)$ be a complete graph, and suppose that for every vertex $v \in V$, the first time Alice asks to add a colour to the colour list for $v$ it will be given colour $1$, the second time it will be given colour $2$, and so on.  Then, whatever order she requests to add colours, we know that there can then be at most one vertex for which we never request a second colour (otherwise two adjacent vertices would have to be assigned colour $1$), at most one vertex for which we request exactly two colours, and more generally for each $1 \leq i \leq n = |V|$ there can be at most $1$ vertex for which we request exactly $i$ colours in total.  Thus we see that 
$$\isc(G) \geq \sum_{i=1}^n i = |V| + |E| \geq \xsc(G).$$
The same argument can easily be extended to graphs that are the disjoint union of complete graphs.

However, we believe that this may be the only case in which there is equality:

\begin{conj}\label{conj:all-non-complete}
If any connected component of $G$ is not a complete graph, then $\isc(G) < \xsc(G)$.
\end{conj}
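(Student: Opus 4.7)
Since $G$ is not complete, I can fix two non-adjacent vertices $u$ and $w$. The key idea is that, in any proper colouring of $G$, the vertices $u$ and $w$ may be assigned the same colour, and the adaptive nature of interactive sum choice should let us exploit such coincidences. My plan is to build an adaptive strategy that, on every branch of the adversary's responses, uses strictly fewer requests than any static sum-choice assignment would.

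The first move is to request a single colour for $u$, obtaining $c_u$, and a single colour for $w$, obtaining $c_w$. If $c_u = c_w$, we commit to colouring both $u$ and $w$ with this shared colour; we then continue by running a sum-choice-optimal strategy on $G - \{u,w\}$, subject to the constraint that no neighbour of $u$ or $w$ receives $c_u$. In this branch we expect to save two requests, since the single colour on each of $u$ and $w$ already suffices, replacing what would ordinarily be a longer list in an $\xsc$-optimal assignment. If $c_u \neq c_w$, the saving is not automatic, so we continue the adaptive probing: repeatedly request an additional colour on one of $u$, $w$, or a carefully chosen vertex, and check whether the arriving colour collides with an earlier answer in a way that allows us to recycle a request. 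The hope is that some such collision must eventually occur, because the adversary has a limited budget of truly fresh colours before the sum-choice bound itself forces a valid colouring to exist.

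The principal obstacle is obtaining a uniform saving across every adversary branch. In the ``different colour'' case the adversary can keep supplying fresh colours, and the naive strategy ends up spending exactly $\xsc(G)$ requests. To force a saving, I expect to need a structural case analysis: either a cherry (two non-adjacent vertices sharing a common neighbour) can be exploited, so that information about the common neighbour's colour constrains both ends simultaneously, or an induction on $|V(G)|$ can reduce to a smaller non-complete instance. A good preliminary test is the path $P_3$ with vertices $u,v,w$ (edges $uv$ and $vw$): by first requesting single colours on $v$, $u$, and $w$ and then, depending on which coincidences appear, requesting one additional colour on the most advantageous vertex (notably extending the list of $v$ precisely when $c_u = c_v = c_w$, so that $u$ and $w$ share $c_v$), one spends at most $4 < 5 = \xsc(P_3)$ requests in every branch.

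The main technical challenge in a general proof is that an optimal sum-choice assignment may distribute list lengths unevenly, and a local saving around $\{u,w\}$ can be cancelled by the adversary forcing extra requests elsewhere. I therefore expect the full proof to require either a global potential-function argument tracking savings across all branches of the adaptive protocol, or a graph-class-by-graph-class analysis with tailored strategies. The latter is the approach the paper appears to adopt, establishing the conjecture for several concrete graph classes rather than in full generality.
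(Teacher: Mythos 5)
There is a genuine gap, and indeed more than that: the statement you are trying to prove is stated in the paper as a \emph{conjecture} and is not proved there in full generality, and your proposal does not close it either. The crucial missing step is exactly the one you flag yourself: in the branch where Bob gives $u$ and $w$ distinct colours, you only ``hope'' that a useful collision must eventually occur. It need not. Bob is an adversary and may answer every request with a brand-new colour never seen before, so no coincidence is ever forced, and the claim that ``the adversary has a limited budget of truly fresh colours'' is false --- nothing in the game bounds the palette. Any saving must therefore come from a structural argument rather than from waiting for repeats. There is a second, more fundamental obstruction that your sketch does not address: your accounting compares the adaptive strategy against the greedy-style bound obtained by giving each vertex a full list, i.e.\ essentially against $|V|+|E|$, but the conjecture requires beating $\xsc(G)$, which for non-sc-greedy graphs (already $K_{2,3}$, and grids, unbalanced $K_{p,q}$) is \emph{strictly smaller} than $|V|+|E|$ and in general is not even known exactly. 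Saving a couple of requests relative to the greedy bound proves nothing about $\xsc(G)$ in those cases; this is precisely why the paper must handle non-sc-greedy classes with separate lower bounds on $\xsc$ (Lemma~\ref{lma:non-treesarelosers}, Theorem~\ref{thm:p3pn}) paired with tailored upper bounds on $\isc$.

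That said, the pieces you do have line up with the paper's partial results. Your $P_3$ analysis is exactly Lemma~\ref{lma:P_3} ($\isc(P_3)=4<5=\xsc(P_3)$), and your ``cherry'' idea --- exploiting two non-adjacent vertices with a common neighbour, using the first-round colours to decide where to spend, and controlling neighbours' lists so a fixed colour is avoided (Lemma~\ref{lma:notc}) --- is the engine of Theorem~\ref{th:scgreeedy-linear}, which proves the conjecture for sc-greedy graphs by induction on $n-\omega(G)$ over an induced $P_3$. But that theorem leans on the hypothesis that $\xsc(G)$ \emph{equals} the greedy bound; without it, your local-saving argument has nothing to compare against, and your fallback (a class-by-class analysis) is the paper's own route, which establishes the conjecture only for particular families and leaves the general statement open.
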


The main purpose of this paper is to give evidence for Conjecture~\ref{conj:all-non-complete}. We confirm it for sc-greedy graphs, and prove more strongly that the gap between $\isc(G)$ and $\xsc(G)$ is an increasing function of the number of vertices for various graph classes including trees, unbalanced complete bipartite graphs and grids (the latter two being classes which are known not to be sc-greedy).

Two other variants of sum-choosability have also been introduced recently.  In the \emph{sum-paintability} variant \cite{carraher15,mahoney15}, the \emph{painter} decides a budget for each vertex in advance (as in sum list colouring), then in each round the \emph{lister} reveals a subset of vertices which have colour $c$ in the list and the painter must decide immediately which of these vertices to paint with colour $c$.  Thus, there is less information available than in the standard setting of sum-choosability since painter must fix the colour of some vertices before knowing the entire contents of the colour lists.  The relationship between the interactive sum choice number and the second of these variants, the \emph{slow-colouring game} \cite{mahoney15slowcol,puleo19} is less clear.  In this variant, at each round, lister reveals a nonempty subset $M$ of the remaining vertices (scoring $|M|$ points), from which painter chooses an stable set to delete; painter seeks to maximise the total score when all vertices have been deleted, while lister seeks to minimise this quantity.  Compared with our setting, lister has the advantage of discovering at the same time all vertices which are permitted to use colour $c$, but on the other hand he must decide immediately which of these to colour with $c$.  In the special case of trees, however, Puleo and West have demonstrated that the same number of rounds are required in both games \cite{puleo19}.

In the remainder of this section, we first give formal definitions of both the sum choice number and interactive sum choice number in Section \ref{sec:defs}, then provide some background and useful results about the sum choice number in Section \ref{sec:sum-choice}.  In Section \ref{sec:basic-facts} we derive some basic properties of the interactive sum choice number, before providing upper bounds on the interactive sum choice number (and showing that these bounds improve on the sum choice number) for various families of graphs in Sections \ref{sec:sc-greedy}, \ref{sec:not-greedy} and \ref{sec:maybe-greedy}.

\subsection{Definitions and notation}
\label{sec:defs}

For general graph notation we refer the reader to \cite{Diestel}. Throughout this paper, we only consider connected graphs.

A proper colouring of a graph $G=(V,E)$ is a function $c:V \rightarrow \mathbb{N}$ such that for every $uv \in E$ we have $c(u) \neq c(v)$.  Given a list assignment $L:V(G) \rightarrow \mathcal{P}(\mathbb{N})$, an $L$-colouring is a proper colouring $c$ such that $c(v) \in L(v)$ for every $v \in V$.  The choice number of a graph $G=(V,E)$ is the smallest natural number $k$ such that $G$ is $L$-colourable for any list assignment $L$ with $|L(v)| \geq k$ for all $v \in V$.  A graph is said to be $k$-choosable if its choice number is at most $k$.

\subsubsection*{Sum choice number}

A function $f:V(G) \rightarrow \mathbb{N}$ is a \emph{choice function} for $G$ if $G$ is $L$-colourable for any list assignment $L : V(G) \rightarrow \mathcal{P}(\mathbb{N})$ such that $|L(v)| \geq f(v)$ for every vertex $v \in V(G)$. The \emph{sum choice number} of $G$, written $\xsc(G)$, is the minimum sum of values of a choice function, namely:
$$\xsc(G)=\min \big\{  \sum_{v \in V} f(v) \ | \ f \text{ choice function for }G \big\}.$$

\subsubsection*{Interactive sum choice number}

Recall that the interactive sum choice number is defined formally as a game, whose input is a graph $G=(V,E)$.  Initially, every vertex $v \in V$ has an empty colour-list, $L_v$. We set $L: v \mapsto L_v$.

At each round, Alice chooses a vertex $v$, and Bob must add to $L_v$ a colour that does not already belong to $L_v$. The game terminates when $G$ is $L$-colourable. Alice seeks to minimise the number of rounds, while Bob seeks to maximise this.

The \emph{interactive sum choice number} of $G$ is defined to be the number of rounds before the game terminates, when both players play optimally.  We write $\isc(G)$ for the interactive sum choice number of $G$.

We say a graph $G$ \emph{admits a strategy} of length $k$ if Alice can play in a certain way so that she can always provide a proper $L$-colouring of the graph at the end of round $k$ or before. The \emph{trace} of a strategy on a given graph is the sequence of vertices on which Alice requested a new colour. Note that a trace does not characterize the strategy used, as a strategy can produce many different traces depending on how Bob plays.

\subsection{Background on the sum choice number}
\label{sec:sum-choice}

A lot of work has been done on the sum choice number of graphs, and in particular on determining which graphs are sc-greedy, but relatively little is known.  A particular challenge in proving special cases of our conjecture is that, for many graphs $G$, $\xsc(G)$ is only lower and upper bounded, not fully determined.  We do not attempt to describe the state of the art in research into the sum choice number, but in the remainder of this section we list facts about the sum choice number which we will use in the rest of the paper.

Most of the graphs for which the sum choice number is known exactly are those which have been shown to be sc-greedy.  These include complete graphs \cite{isaak04}, trees \cite{isaak04}, cycles \cite{berliner06}, cycles with pendant paths \cite{heinold-thesis}, the Petersen graph \cite{heinold-thesis}, $P_2 \square P_n$ \cite{heinold-thesis}, generalised theta-graphs $\Theta_{k_1,k_2,k_3}$ (unless $k_1 = k_2 = 1$ and $k_3$ is odd), certain wheels \cite{wheels}, and trees of cycles \cite{lastrina13,heinold-thesis}.

The smallest graph which is \emph{not} sc-greedy is $K_{2,3}$: the greedy bound tells us that $\xsc(K_{2,3})$ is at most 11, but in fact $K_{2,3}$ is 2-choosable, implying that $\xsc(K_{2,3}) \leq 10$ (and it is straightforward to check that in fact $\xsc(K_{2,3}) = 10$).  Another graph which is not sc-greedy but whose sum choice number has been determined exactly is the $3 \times n$ grid, $P_3 \square P_n$:
\begin{thm}[\cite{heinold12}]\label{thm:p3pn}
$\xsc(P_3 \square P_n) = 8n - 3 - \lfloor \frac{n}{3} \rfloor$
\end{thm}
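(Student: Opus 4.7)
The plan is to prove matching upper and lower bounds of $8n - 3 - \lfloor n/3 \rfloor$. The standard column-by-column greedy ordering already gives a choice function summing to $8n - 3$ (contributing $5$ for the first column and $8$ for each subsequent column), so both bounds amount to controlling by exactly $\lfloor n/3 \rfloor$ the ``savings'' one can (and cannot) achieve over greedy.

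For the upper bound, I would modify the greedy choice function by partitioning the first $3\lfloor n/3 \rfloor$ columns into $\lfloor n/3 \rfloor$ disjoint $3 \times 3$ blocks and reducing one list size by $1$ inside each block (a natural candidate being the bottom-right vertex of each block, whose greedy value is $3$). The verification proceeds by induction on the number of blocks: assuming any list assignment of the prescribed sizes on the first $3k$ columns extends to a proper $L$-colouring, show that the next block can always be completed despite the reduced corner list, using the flexibility offered by the $C_4$s inside the block. For the lower bound, I would proceed by induction on $n$ as well: given a putative choice function $f$ with $\sum_v f(v) < 8n - 3 - \lfloor n/3 \rfloor$, locate a triple of consecutive columns whose $f$-value is so small that we can adversarially choose lists on those columns whose colours conflict irreconcilably with the inductively-assumed colouring of the rest of the grid. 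Alternatively, a discharging/Hall-type argument assigning each column a ``deficit'' measuring how much $f$ falls short of the greedy allocation, and then showing the sum of deficits cannot exceed $\lfloor n/3 \rfloor$, might be cleaner.

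The harder direction is almost certainly the lower bound. The grid contains many $4$-cycles, each of which in isolation suggests a possible saving, so it is a priori unclear why the total saving is capped at $\lfloor n/3 \rfloor$ and not, say, linear in the number of faces. Controlling this rigorously will likely require a strengthened inductive hypothesis that tracks not only the total list size on each column-prefix but also the distribution of deficits across columns, preventing savings in one block from being ``borrowed'' against another. A certain amount of base-case analysis for $P_3 \square P_3$ (where one must pin down $\xsc = 20$ directly, in the spirit of the $K_{2,3}$ calculation recalled above) appears unavoidable.
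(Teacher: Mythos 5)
First, note that the paper does not prove this statement at all: it is quoted verbatim from Heinold's paper \cite{heinold12}, where establishing $\xsc(P_3 \square P_n) = 8n - 3 - \lfloor \frac{n}{3} \rfloor$ takes a substantial, dedicated argument. So there is no internal proof to compare your proposal against, and your sketch has to stand on its own as a proof of a known but nontrivial result.

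As it stands, it does not. For the upper bound, the specific claim that one may reduce by $1$ the list size of the bottom-right vertex of each $3 \times 3$ block and still have a choice function is exactly the content that needs proving, and you defer it to ``the flexibility offered by the $C_4$s inside the block.'' This is not an argument: with a reduced list somewhere in a block, a left-to-right greedy extension can genuinely get stuck, and one must show that the earlier columns can always be (re)coloured so as to leave a free colour at the deficient vertex, for \emph{every} admissible list assignment. Moreover the blocks are not independent -- they interact through the edges joining column $3k$ to column $3k+1$ -- so the induction needs a hypothesis about which colourings of the boundary column are achievable, not merely that ``any list assignment on the first $3k$ columns extends.'' You do not state or verify such a hypothesis, nor do you verify the base case $\xsc(P_3 \square P_3) = 20$ (which you yourself flag as unavoidable). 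The lower bound is in even worse shape: you offer two mutually exclusive strategies (an adversarial-lists induction over triples of columns, or a discharging/deficit argument) in the conditional mood, with no construction of bad lists, no definition of the deficit, and no argument why savings cannot accumulate beyond one per three columns -- which, as you correctly observe, is precisely the hard point, since the grid contains linearly many $4$-cycles each of which locally permits a saving. In short, both directions are reduced to declarations of intent; the key combinatorial work that makes the theorem true is missing.
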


While the sum choice number is not known exactly for most complete bipartite graphs, new bounds have recently been derived for the sum choice number of $K_{p,q}$ in the case that $p$ is much smaller than $q$.

\begin{thm}[\cite{furedi16}]
There exist positive constants $c_1$ and $c_2$ such that, for all $p \geq 2$ and $q \geq 4p^2\log p$,
$$2q + c_1 p \sqrt{q \log p } \leq \xsc(K_{p,q}) \leq 2q + c_2 p \sqrt{q \log p}.$$
\end{thm}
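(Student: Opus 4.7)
The plan is to prove both inequalities probabilistically, with the small side $A$ of size $p$ receiving long lists and the large side $B$ of size $q$ receiving lists of length exactly $2$.

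For the upper bound, I would assign each vertex of $A$ a list of length $s := \lceil c_2 \sqrt{q \log p}\rceil$ and each vertex of $B$ a list of length $2$, giving total budget $ps + 2q \leq 2q + c_2 p\sqrt{q\log p}$. Given any adversarial list assignment $L$ of these sizes, it suffices to show that an $L$-colouring exists. I would choose a colour for each vertex of $A$ independently and uniformly at random from its list, and then for each $w \in B$ bound the probability that both colours of $L(w)$ are already used on $A$. A colour $c$ appears in at most $p$ lists on $A$ and is chosen with probability $1/s$ whenever it does, so $c$ is used on $A$ with probability at most $p/s$; the two events for the two colours in $L(w)$ are negatively correlated, so the failure probability at $w$ is at most $(p/s)^2$. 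The assumption $q \geq 4p^2 \log p$ (together with the right choice of $c_2$) ensures this product times $q$ is bounded away from $1$, so a union bound (or, if needed, the Lovász Local Lemma, since each $w$'s event depends only on the $p$ random choices on $A$) delivers a valid colouring with positive probability.

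For the lower bound, I would exhibit a family of adversarial lists for which any choice function of total weight $< 2q + c_1 p \sqrt{q\log p}$ fails. The natural construction is to fix a ground set of roughly $\Theta(p\sqrt{q/\log p})$ colours and, for each $w \in B$, to let $L(w)$ be a uniformly random $2$-subset of that ground set; the $L(v)$ for $v \in A$ are forced to be subsets of the ground set as well (colours outside can be assumed never to appear). Given a putative choice function of small total weight, one fixes any colouring of $A$ and computes the expected number of $w \in B$ whose $2$-element list is contained in the set of colours used on $A$; this expectation is $\Theta(q \cdot (\text{used}/\text{ground set})^2)$, and the budget constraint forces the ratio to be large enough that the expectation exceeds $1$, producing a bad $w$ and contradicting the choice-function property. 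Summing these per-vertex deficits across $B$ gives the claimed additive gap $c_1 p\sqrt{q \log p}$.

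The main obstacle is the lower bound: one must arrange the adversarial construction so that the small-side budget and the ground-set size interact tightly, and show via a second-moment or Chernoff argument that the bad event for some $w \in B$ occurs with positive probability even \emph{after} the colourer is allowed to adapt the colouring on $A$ to the revealed lists on $B$. The hypothesis $q \geq 4p^2 \log p$ is precisely what lets the $p$-side randomness dominate in both directions, matching the upper and lower bounds up to the constants $c_1, c_2$.
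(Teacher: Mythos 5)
Note first that the paper does not prove this statement at all: it is quoted as background from \cite{furedi16} (and the ``$a$'' in the upper bound is a typo for $p$), so your argument has to stand on its own --- and both halves have genuine quantitative gaps. For the upper bound, with $|L(v)|=s=\lceil c_2\sqrt{q\log p}\,\rceil$ on $A$ and independent uniform choices, your per-vertex failure bound is $(p/s)^2=p^2/(c_2^2\,q\log p)$, so the union bound over the $q$ vertices of $B$ gives $q\cdot(p/s)^2=p^2/(c_2^2\log p)$: the factor $q$ cancels, the hypothesis $q\ge 4p^2\log p$ plays no role, and the quantity tends to infinity with $p$, so it is not ``bounded away from $1$'' for any constant $c_2$. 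To close the union bound you would need $s>p\sqrt q$, i.e.\ an $A$-side budget of order $p^2\sqrt q$, far above the claimed $p\sqrt{q\log p}$. The Local Lemma does not rescue this: the adversary can make every $L(w)$, $w\in B$, meet the lists of the same vertices of $A$, so the dependency degree among the bad events can be as large as $q-1$ and the LLL condition collapses to essentially the same inequality. A correct proof must exploit the colourer's adaptivity to the revealed lists (for instance treating colours that occur in many $A$-lists separately), not a list-oblivious uniform random colouring analysed by a union bound.

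For the lower bound, the step ``the expected number of bad $w\in B$ exceeds $1$'' is computed for a fixed colouring of $A$, but the colourer colours after seeing the lists, so you must show that with positive probability \emph{every} transversal of the $A$-lists (up to $\binom{m}{p}$ used sets) contains some pair. With your ground set of size $m=\Theta(p\sqrt{q/\log p})$, the probability that a fixed used set of size at most $p$ avoids all $q$ random pairs is about $\exp(-qp^2/m^2)=p^{-\Theta(1)}$, which is hopeless against the exponentially many colourings; forcing that union bound to work within a single uniform random-pair model pushes $m$ far below the claimed threshold $p\sqrt{q\log p}$ (note also that your $m$ already differs from that threshold by a $\log p$ factor). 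In addition, a lower bound on $\xsc$ must defeat \emph{every} choice function of small total weight, including those giving some $B$-vertices lists of size $1$ or at least $3$ and distributing the $A$-budget unevenly; fixing size $2$ on $B$ and a common size on $A$ is only legitimate in the upper-bound direction. You flag the adaptivity issue yourself as ``the main obstacle,'' but with the stated parameters it is not surmountable by a second-moment or Chernoff argument; a more structured adversarial construction is required.
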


We finish this section by noting two simple facts about the sum choice number of graphs.  First of all, if $H$ is a subgraph of $G$, then $\xsc(H) \leq \xsc(G)$.  Secondly, if $G$ is the disjoint union of two graphs $G_1$ and $G_2$ then $\xsc(G) = \xsc(G_1) + \xsc(G_2)$.

\section{Basic facts about interactive sum list colouring}
\label{sec:basic-facts}

In this section we will discuss a number of simple facts about interactive sum list colouring which we will exploit throughout the rest of the paper.  These include assumptions we can make about Alice's strategy, ways to modify Alice's strategy, and bounds on the interactive sum choice number of graphs with specific properties.

The first and surprisingly useful observation is that Alice will necessarily need to request a colour for each vertex at some point along the strategy, and thus can make these requests consecutively at the start of the game without increasing the number of moves required.

\begin{obs}\label{obs:oneeach}
Given a graph $G$ on $n$ vertices $v_1,\ldots,v_n$, Alice has a strategy of length $k$ for $G$ iff she has a strategy of length $k$ for $G$ such that any trace starts with $(v_1,v_2,\ldots,v_n)$.
\end{obs}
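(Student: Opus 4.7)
The ``if'' direction is immediate: a strategy satisfying the trace restriction is in particular a strategy of length $k$, so I focus on the ``only if'' direction. The key preliminary observation is that in any successful strategy Alice must request a colour for each vertex at least once; otherwise that vertex's list is empty at termination and $G$ is manifestly not $L$-colourable. Consequently any trace of a length-$k$ strategy has length at least $n$ and contains each $v_i$ at least once.

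Given a strategy $S$ of length $k$, my plan is to build a new strategy $S'$ by a ``pre-fetch and simulate'' reduction. In Phase~1, $S'$ simply asks $v_1, v_2, \ldots, v_n$ in order, recording Bob's responses as $c_1, \ldots, c_n$. In Phase~2, Alice mentally simulates $S$ against a specific virtual Bob as follows. Whenever the simulated $S$ would make its $i$-th request on a vertex $v$: if $i=1$, Alice does nothing in the real game and simply feeds $c_v$ into the simulation; if $i \ge 2$, she actually asks $v$ in the real game, receives Bob's response, and feeds that colour back into the simulation. The total number of real requests $S'$ makes is $n + (|S|-n) = |S| \le k$, using the preliminary observation to ensure $|S| \ge n$.

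Two things need verifying: that the virtual Bob is a legal adversary for $S$ (so that $S$'s guarantee of termination applies) and that the real list assignment at termination coincides with the simulated one. Legality reduces to checking that the simulated responses on each vertex are pairwise distinct; this holds because the Phase~1 colour $c_v$ differs from each subsequent response (the real Bob must avoid the real list, which already contains $c_v$), and successive Phase~2 responses differ among themselves by the same rule. An easy invariant shows that the simulated and real lists agree vertex by vertex throughout Phase~2. Since $S$ terminates with an $L$-colouring against \emph{every} Bob, it does so against this virtual Bob, and hence $S'$ terminates successfully. By construction every trace of $S'$ begins with $(v_1, \ldots, v_n)$.

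The main thing to be careful with is the bookkeeping around the virtual Bob --- confirming that every real Bob against $S'$ induces a legal (and hence covered) virtual Bob in the simulation of $S$. The preliminary observation pulls its weight here: it guarantees that each Phase~1 colour is actually consumed by the simulation rather than wasted, which is what makes the round count balance out to $|S|$ rather than $|S|+n$.
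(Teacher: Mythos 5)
Your argument is correct and is essentially the justification the paper intends (the paper states this as an observation, implicitly relying on exactly this pre-fetch-and-simulate reordering together with the fact that every vertex must be requested at least once before termination). One tiny imprecision: the real and simulated lists of a vertex $v$ agree only from the moment the simulated strategy makes its first request on $v$ (before that the real list strictly contains the simulated one), but since containment suffices for termination and agreement holds whenever the virtual Bob's legality is checked, the conclusion is unaffected.
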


Therefore, we introduce the notion of $\alpha$-strategy, for $\alpha : V(G) \rightarrow \mathbb{N}$. An $\alpha$-strategy is a strategy assuming $L$ starts as $(v \mapsto [\alpha(v)])_{v \in V(G)}$ instead of $(v \mapsto [\ ])_{v \in V(G)}$.

\begin{obs}\label{obs:firstchoices}
Given a graph $G$ on $n$ vertices, Alice has a strategy of length $k$ for $G$ iff for every function $\alpha : V(G) \rightarrow \mathbb{N}$ she has an $\alpha$-strategy of length $k-n$.
\end{obs}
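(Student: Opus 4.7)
The plan is to deduce this observation directly from Observation~\ref{obs:oneeach}, by interpreting the first $n$ rounds of any strategy as precisely the setup phase of an $\alpha$-strategy. The key point is that when the lists start empty, Bob's response to the first request on a vertex $v$ can be an arbitrary natural number (nothing is yet forbidden from $L_v$), so after $n$ rounds in which Alice queries each vertex exactly once the list assignment has the form $(v \mapsto [\alpha(v)])_{v \in V(G)}$ for some $\alpha : V(G) \to \mathbb{N}$, and moreover Bob can force $\alpha$ to be whichever function he wishes by choosing his responses accordingly.

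For the ``only if'' direction, I would start from any length-$k$ strategy for Alice and invoke Observation~\ref{obs:oneeach} to replace it by one whose traces all begin with $(v_1, \ldots, v_n)$. Fix an arbitrary $\alpha : V(G) \to \mathbb{N}$ and restrict attention to the branch of this strategy that arises when Bob answers the request on $v_i$ with colour $\alpha(v_i)$ during round $i$, for $i = 1, \ldots, n$. The game state after round $n$ then matches the starting configuration of an $\alpha$-strategy, so Alice's remaining behaviour on this branch constitutes an $\alpha$-strategy of length $k - n$ that succeeds against every possible continuation by Bob.

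For the ``if'' direction, I would construct the length-$k$ strategy by composition: Alice plays $v_1, \ldots, v_n$ in her first $n$ rounds, so that Bob's responses define some $\alpha$ with $\alpha(v_i)$ equal to the colour supplied in round $i$; she then continues by following the $\alpha$-strategy of length $k-n$ guaranteed by hypothesis. The combined strategy has length exactly $k$ and, by definition of an $\alpha$-strategy, terminates with a valid $L$-colouring. I do not anticipate any real obstacle here: the only thing to verify is that the in-game configuration after the first $n$ rounds genuinely coincides with the stipulated starting configuration of an $\alpha$-strategy, which is immediate because a singleton list is determined by its unique element. The result is essentially a corollary of Observation~\ref{obs:oneeach}, with $\alpha$ serving as a bookkeeping device for Bob's replies during the setup phase.
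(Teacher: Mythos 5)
Your argument is correct and is exactly the reasoning the paper intends: the statement is left as an unproved observation, to be read as an immediate consequence of Observation~\ref{obs:oneeach}, with the first $n$ rounds (one request per vertex) producing the singleton lists $(v \mapsto [\alpha(v)])$ and the remainder of the game being an $\alpha$-strategy. Both directions of your composition/restriction argument are sound, so there is nothing to add.
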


We can also combine strategies, as follows.

\begin{obs}\label{obs:combination}
Given a graph $G$ on $n$ vertices, and $\alpha : V(G) \rightarrow \mathbb{N}$, if within $k_1$ requests, there is a function $\beta : V(G) \rightarrow \mathbb{N}$ such that $\beta(v) \in L(v)$ for every $v \in V(G)$, and Alice has a $\beta$-strategy of length $k_2$, then Alice has an $\alpha$-strategy of length $k_1+k_2$.
\end{obs}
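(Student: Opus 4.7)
The plan is to concatenate the two strategies: first have Alice play her $\alpha$-strategy for at most $k_1$ rounds, then switch to the $\beta$-strategy for at most $k_2$ more rounds, yielding a total $\alpha$-strategy of length at most $k_1 + k_2$.

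After the first phase, by hypothesis we reach a state in which $\beta(v) \in L(v)$ for every vertex $v$; for each $v$, fix a distinguished subset $B(v) \subseteq L(v)$ of ``initial'' colours corresponding to this $\beta$-configuration (of size $\beta(v)$). Alice now plays the $\beta$-strategy as if the lists were $B$ rather than $L$, feeding Bob's responses into the decision tree of the $\beta$-strategy as though the game had just begun in the $\beta$-configuration. Whenever Alice requests a colour for some vertex $v$ during this second phase, Bob must return some colour $c \notin L(v)$, so in particular $c \notin B(v)$ and $c$ is not among the colours previously returned during this phase; hence the sequence of additions Alice witnesses is a legitimate Bob-response for the $\beta$-strategy starting from initial lists $B$. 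When the $\beta$-strategy terminates (after at most $k_2$ rounds), it produces a proper colouring using only the colours in the simulated lists $L'$ that it has been tracking; since $L'(v) \subseteq L(v)$ throughout, this colouring is also a proper $L$-colouring, so Alice wins the original game.

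The only subtlety worth checking is that Alice's simulation of the $\beta$-strategy remains faithful in the presence of the extra colours left over from the first phase. The requirement that Bob always returns a colour outside the current $L(v)$ is exactly what makes this work: every new colour he produces during the second phase is automatically outside $B(v)$ and outside everything previously added, so the view presented to the $\beta$-strategy is indistinguishable from a standalone execution. I do not expect any real obstacle here beyond being careful about this bookkeeping.
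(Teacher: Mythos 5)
Your argument is correct and is essentially the intended justification: the paper states this as an Observation without proof, and your concatenation-plus-simulation reasoning (the extra colours already present in the real lists can only help, every colour Bob legally adds in the real game is also a legal addition in the simulated $\beta$-game since the simulated lists are subsets of the real ones, and the proper colouring produced from the simulated lists is therefore an $L$-colouring) is exactly what the authors leave implicit. One minor notational slip: in the paper's convention $\beta(v)$ is a single colour (an $\alpha$-strategy starts each list as the singleton $[\alpha(v)]$), so your distinguished set $B(v)$ is just $\{\beta(v)\}$ rather than a set ``of size $\beta(v)$''; this does not affect the argument.
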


Observation~\ref{obs:combination} is particularly helpful when dealing with graphs which admit a small vertex or edge cut.

We now describe how Alice can modify her strategy to ensure that a particular colour is not used to colour a given vertex.

\begin{lma}\label{lma:notc}
Given a graph $G$ on $n$ vertices, a vertex $u$ and a colour $c$, if Alice has a strategy of length $k$ for $G$ then Alice has a strategy of length $k+1$ for $G$ such that $G$ admits an $L$-colouring $\beta$ where $\beta(u) \neq c$.
\end{lma}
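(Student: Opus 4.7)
The plan is to convert Alice's strategy $S$ of length $k$ into a strategy $S'$ of length $k+1$ by simulating $S$ on a ``virtual'' list assignment $L^*$ from which the colour $c$ is forbidden at $u$. The whole proof hinges on the fact that colours added to any single $L_v$ must be distinct, so Bob can only ever give colour $c$ at $u$ at most once.

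Concretely, Alice plays $S'$ as follows. She runs $S$ and maintains a virtual assignment $L^*$ which she pretends is the real one when feeding responses to $S$. Whenever $S$ requests a colour at some $v \neq u$, Alice requests at $v$ in the real game and passes Bob's response to $S$, updating both $L_v$ and $L^*_v$ with it. Whenever $S$ requests a colour at $u$, Alice requests at $u$ in the real game: if Bob returns some $c' \neq c$, she feeds it to $S$ and adds it to $L^*_u$; if Bob returns $c$, she immediately makes one additional real-world request at $u$, and the colour Bob gives in response (necessarily not $c$, since $c$ is already in $L_u$) is what she feeds to $S$ and records in $L^*_u$. Since Bob can reply with $c$ at $u$ at most once, this extra request is triggered at most once across the whole game, so Alice uses at most $k+1$ rounds in total (and she pads with an arbitrary request if she used only $k$).

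At the end of the game, by construction we have $L^*_v \subseteq L_v$ for every vertex $v$, and $c \notin L^*_u$ regardless of Bob's play. From $S$'s point of view, Alice has played $S$ honestly against the list assignment $L^*$, so by the hypothesis that $S$ is a valid length-$k$ strategy, $L^*$ admits a proper colouring $\beta$. Then $\beta$ is also a proper $L$-colouring since $L^* \subseteq L$ pointwise, and $\beta(u) \in L^*_u$ forces $\beta(u) \neq c$, as required.

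The only subtle point is justifying that the single ``extra round'' always suffices, and that is exactly the distinct-colours constraint on Bob's replies at $u$; the rest is a bookkeeping argument about simulating a strategy on a subset of the true list assignment. I do not anticipate any real obstacle beyond phrasing the simulation cleanly.
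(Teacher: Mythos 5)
Your proposal is correct and follows essentially the same idea as the paper's proof: run the length-$k$ strategy, and whenever Bob answers $c$ at $u$, spend one extra request at $u$ and feed the replacement colour to the simulated strategy, noting that Bob can offer $c$ at $u$ only once. Your explicit bookkeeping with the virtual assignment $L^*\subseteq L$ is just a more formal phrasing of the paper's ``pretend that was the colour Bob gave her'' step.
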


\begin{proof}
We let Alice unfold her strategy on $G$. Whenever she requests a colour for $u$, we check whether Bob gives her colour $c$. When he doesn't, we keep going. If he does, we request another colour for $u$, and ask Alice to pretend that was the colour Bob gave her in the first place. Alice's strategy terminates in at most $k$ rounds regardless of which colours Bob gives to her, and Bob can only offer $c$ for $u$ once, so the whole strategy terminates in at most $k+1$ rounds.
\end{proof}

Applying this result repeatedly gives the following immediate corollary.

\begin{cor}\label{cor:allnotc}
Given a graph $G= (V,E)$, a set $U \subseteq V$ and a function $f: U \rightarrow \mathcal{P}(\mathbb{N})$, if Alice has a strategy of length $k$ for $G$ then Alice has a strategy of length $k + \sum_{u \in U} |f(u)|$ for $G$ such that $G$ admits an $L$-colouring $\beta$ where, for every $u \in U$, $\beta(u) \notin f(u)$.
\end{cor}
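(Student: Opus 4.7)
The plan is to prove the corollary by induction on the quantity $\sigma := \sum_{u \in U} |f(u)|$, applying Lemma~\ref{lma:notc} once at each inductive step. The base case $\sigma = 0$ is immediate: all the sets $f(u)$ are empty, so any $L$-colouring produced by Alice's original strategy of length $k$ vacuously satisfies the constraint.

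For the inductive step, pick any pair $(u_0,c_0)$ with $u_0 \in U$ and $c_0 \in f(u_0)$, and let $f' : U \to \mathcal{P}(\mathbb{N})$ agree with $f$ on every vertex except $u_0$, where $f'(u_0) = f(u_0) \setminus \{c_0\}$. Then $\sum_{u \in U} |f'(u)| = \sigma - 1$, so by the inductive hypothesis Alice has a strategy of length $k + \sigma - 1$ which, regardless of Bob's responses, guarantees an $L$-colouring $\beta$ with $\beta(u) \notin f'(u)$ for all $u \in U$. Feed this strategy into (the construction underlying) Lemma~\ref{lma:notc} applied to the pair $(u_0,c_0)$: the resulting strategy has length at most $k + \sigma$, and it produces an $L$-colouring avoiding $c_0$ at $u_0$.

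The one point deserving attention is that Lemma~\ref{lma:notc} as stated only guarantees the single new avoidance constraint $\beta(u_0) \neq c_0$, and one must check that the earlier constraints $\beta(u) \notin f'(u)$ survive the modification. Inspecting the proof of Lemma~\ref{lma:notc}, Alice simulates her input strategy on a ``virtual'' list in which $c_0$ is never added to $L_{u_0}$ (whenever Bob actually responds with $c_0$ at $u_0$, she requests an additional colour and substitutes it for $c_0$ in the simulation). The virtual list is pointwise contained in the real list, so any $L$-colouring valid for the virtual lists is also valid for the real ones; applying the inductive hypothesis inside the simulation therefore gives a colouring $\beta$ with $\beta(u) \notin f'(u)$ for all $u \in U$ and $\beta(u_0) \neq c_0$, i.e.\ $\beta(u) \notin f(u)$ for every $u \in U$.

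No step is a serious obstacle; the only subtlety is the bookkeeping just described, ensuring that iterating Lemma~\ref{lma:notc} really does accumulate avoidance constraints rather than replacing them. Once this is observed, the induction produces the desired strategy of length $k + \sum_{u \in U} |f(u)|$.
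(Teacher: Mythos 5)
Your proof is correct and follows the paper's approach: the paper simply states that the corollary follows by applying Lemma~\ref{lma:notc} repeatedly, which is exactly your induction on $\sum_{u\in U}|f(u)|$. Your extra bookkeeping check that earlier avoidance constraints survive each application (because the simulated lists are contained in the real lists) is a valid elaboration of the same argument, not a different route.
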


In the remainder of this section we give upper bounds on $\isc(G)$ that depend on properties of $G$.  We begin by considering the size of the largest stable set in $G$.

\begin{lma}\label{lma:stable}
If $\alpha$ is the number of vertices in the largest stable set in $G$, then $\isc(G) \geq 2|V(G)| - \alpha$.
\end{lma}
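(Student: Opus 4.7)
The plan is to exhibit a strategy for Bob that forces at least $2|V(G)| - \alpha$ requests, regardless of what Alice does. The natural candidate is the same ``identity'' strategy Bob used in the complete-graph argument from the introduction: whenever Alice asks for a new colour at vertex $v$, Bob responds with the smallest positive integer not already in $L(v)$. Thus if Alice has, at some moment, requested $k_v$ colours at vertex $v$, then $L(v) = \{1, 2, \ldots, k_v\}$.

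Next, I would translate the termination condition into an inequality on $\sum_v k_v$. The game ends only when $(G, L)$ admits a proper $L$-colouring $c$, and since $L(v) = \{1, \ldots, k_v\}$ any such $c$ must satisfy $c(v) \leq k_v$ for every vertex $v$. Hence
\[
    \sum_{v \in V(G)} k_v \;\geq\; \sum_{v \in V(G)} c(v).
\]
So it suffices to lower bound $\sum_v c(v)$ over all proper colourings $c: V(G) \to \mathbb{N}$.

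The key combinatorial step, which is quite short, is the following. For any proper colouring $c$, the class $S = c^{-1}(1)$ is an independent set, hence $|S| \leq \alpha$. Every vertex outside $S$ receives a colour of value at least $2$, so
\[
    \sum_{v \in V(G)} c(v) \;\geq\; |S| + 2\bigl(|V(G)| - |S|\bigr) \;=\; 2|V(G)| - |S| \;\geq\; 2|V(G)| - \alpha.
\]
Combining this with the previous display gives $\sum_v k_v \geq 2|V(G)| - \alpha$ at termination, against Bob's strategy, so $\isc(G) \geq 2|V(G)| - \alpha$.

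I do not anticipate any real obstacle: both ingredients (Bob's identity strategy and the independent-set bound on $c^{-1}(1)$) are elementary, and the only thing to be careful about is clearly recording that the assumption ``the game has terminated'' is precisely what gives the existence of the proper colouring $c$ with $c(v) \leq k_v$. No appeal to the earlier Observations is needed, although one could remark in passing that Observation~\ref{obs:oneeach} is consistent with the bound (since $2|V(G)| - \alpha \geq |V(G)|$).
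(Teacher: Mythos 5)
Your proof is correct and rests on the same idea as the paper's: Bob answers every request with the smallest colour not yet offered, so that only vertices receiving more than one colour can avoid colour $1$, and since the colour-$1$ class is a stable set of size at most $\alpha$, at least $|V(G)|-\alpha$ vertices need a second request. The only difference is presentational --- you argue directly by bounding $\sum_v c(v)$, whereas the paper argues by contradiction about the set of vertices with a single colour --- so no further comment is needed.
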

\begin{proof}
Suppose, for a contradiction, that Alice has a strategy that will guarantee a proper colouring after only $2|V(G)|- \alpha - 1$ rounds.  This must work whatever colours Bob chooses, so we may assume that every vertex receives colour $1$ as its first colour.  Once every vertex has been given its first colour, we have $|V(G)|- \alpha - 1$ moves remaining.  There is therefore a set $W$ of size at least $\alpha + 1$ vertices in which none is given a second colour; however, since the largest stable set in $G$ contains $\alpha$ vertices, there exist two adjacent vertices in $W$.  These vertices must both end up with colour $1$, contradicting the assumption that Alice can find a proper colouring.
\end{proof}

In the next two lemmas, we consider the relationship between the interactive sum choice number of a graph and that of its subgraphs.

\begin{lma}
Let $G$ be a graph and $H$ a subgraph of $G$.  Then $\isc(H) \leq \isc(G) - |V(G) \setminus V(H)|$.
\label{lma:del-edges}
\end{lma}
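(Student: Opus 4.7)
The plan is to convert an optimal Alice strategy for $G$ into a strategy for $H$, saving one request per vertex in $V(G) \setminus V(H)$.

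Let $k = \isc(G)$, and let $\sigma$ be an Alice strategy of length $k$ for $G$. By Observation~\ref{obs:oneeach}, I may assume that $\sigma$ requests a colour for every vertex of $G$ at least once. I will now describe a strategy $\sigma'$ for Alice on $H$. She simulates $\sigma$ played on $G$, maintaining a simulated list $\hat{L}_v$ for each $v \in V(G)$, and in parallel plays real requests in the $H$-game on $V(H)$. Whenever $\sigma$ requests a colour on a vertex $v \in V(H)$, Alice makes a real request on $v$ in the $H$-game, takes the colour $c$ that Bob gives her, and adds $c$ to both $L_v$ (real) and $\hat{L}_v$ (simulated). Whenever $\sigma$ requests a colour on a vertex $v \in V(G) \setminus V(H)$, Alice makes no real request and instead herself simulates Bob's response by choosing a completely fresh colour (one that does not appear in any $\hat{L}_u$ so far) and adding it to $\hat{L}_v$.

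Since $\sigma$ is a valid strategy against every possible Bob, it is in particular valid against the specific (legal) responses produced by Alice's simulation, so it terminates within $k$ simulated rounds with a proper $\hat{L}$-colouring $c$ of $G$. For each $v \in V(H)$, the simulated list $\hat{L}_v$ equals the real list $L_v$ built up during the $H$-game; and every edge of $H$ is an edge of $G$, so $c$ restricted to $V(H)$ is a proper $L$-colouring of $H$. Hence the $H$-game terminates at the same moment. The number of real requests Alice made on $H$ equals the number of times $\sigma$ requested a colour on a vertex of $V(H)$, which is at most $k$ minus the number of requests $\sigma$ made on $V(G) \setminus V(H)$; by our assumption on $\sigma$, this latter quantity is at least $|V(G) \setminus V(H)|$. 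Therefore $\sigma'$ is an Alice strategy of length at most $k - |V(G) \setminus V(H)|$ on $H$, giving the claimed bound.

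The only subtle point is ensuring that the simulated responses on vertices of $V(G) \setminus V(H)$ are legal Bob moves and do not disrupt the termination of $\sigma$; using globally fresh colours achieves both at once, since fresh colours are always distinct from anything in $\hat{L}_v$ and never create new colouring conflicts. Invoking Observation~\ref{obs:oneeach} at the start is what guarantees the $|V(G) \setminus V(H)|$ saving, and is the reason we get exactly this constant rather than something weaker.
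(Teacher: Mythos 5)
Your proof is correct and follows essentially the same route as the paper: Alice plays her optimal $G$-strategy on $H$, omitting (here: internally simulating) the rounds on vertices of $V(G)\setminus V(H)$, and the fact that every such vertex must receive at least one request yields the saving of $|V(G)\setminus V(H)|$ rounds. Your explicit simulation of Bob's responses with fresh colours just spells out a detail the paper's shorter proof leaves implicit.
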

\begin{proof}
Suppose Alice has a strategy which will guarantee a proper colouring of $G$ after $k$ moves.  She can play this same strategy on $H$, simply omitting any rounds in which she would request a colour for a vertex in $V(G) \setminus V(H)$.  This will certainly give a proper colouring of $H$.  Moreover, Alice's strategy to colour $G$ must include at least one round in which she requests a colour for each vertex in $V(G) \setminus V(H)$, so in her new strategy she omits at least $|V(G) \setminus V(H)|$ rounds.  Thus she is guaranteed to obtain a proper colouring of $H$ after at most $k - |V(G) \setminus V(H)|$ rounds.
\end{proof}

\begin{lma}
Let $G$ be a graph, and let $H$ be an induced subgraph of $G$.  Then
$$\isc(G) \leq \isc(H) + \isc(G[V(G) \setminus V(H)]) + |E(V(H),V(G)\setminus V(H)|.$$
\label{lma:del-subg}
\end{lma}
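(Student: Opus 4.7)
The plan is to decompose Alice's task on $G$ into two sequential phases, with a colour-avoidance correction to handle the edges across the cut. In the first phase Alice colours the ``outside'' $G' := G[V(G) \setminus V(H)]$ using her optimal strategy there; in the second phase she colours $H$ using a modification of her optimal $H$-strategy that steers $H$'s colours away from whatever Bob committed to on $V(G')$ during the first phase. This correction is exactly what Corollary~\ref{cor:allnotc} is designed to supply.

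Concretely, Alice first plays her optimal $\isc(G')$-strategy on $G'$, requesting colours only for vertices of $V(G)\setminus V(H)$. After $\isc(G')$ rounds this yields a proper $L$-colouring $\beta_1$ of $G'$. Using $\beta_1$, I would define $f:V(H)\to \mathcal{P}(\mathbb{N})$ by $f(v) = \{\beta_1(u) : u \in N_G(v)\setminus V(H)\}$, the set of colours $v$ must avoid to remain compatible with $\beta_1$ across the cut. Alice then plays the strategy guaranteed by Corollary~\ref{cor:allnotc} applied to $H$ with forbidden-colour function $f$; this takes at most $\isc(H) + \sum_{v\in V(H)} |f(v)|$ additional rounds and produces a proper $L$-colouring $\beta_2$ of $H$ with $\beta_2(v) \notin f(v)$ for every $v \in V(H)$.

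The remaining step is to check that $\beta := \beta_1 \cup \beta_2$ is a proper $L$-colouring of the full graph $G$: edges inside $G'$ are handled by $\beta_1$, edges inside $H$ by $\beta_2$, and for any cut edge $uv$ with $u \in V(G')$ and $v \in V(H)$, the constraint $\beta_2(v) \notin f(v) \ni \beta_1(u)$ forces $\beta(u) \neq \beta(v)$. Summing rounds and using the trivial bound $\sum_{v \in V(H)} |f(v)| \leq \sum_{v\in V(H)} |N_G(v)\setminus V(H)| = |E(V(H),V(G)\setminus V(H))|$ then yields the claimed inequality. The only point requiring a little care is that the forbidden function $f$ is not fixed in advance but only revealed once Bob has played in the first phase; however Corollary~\ref{cor:allnotc} supplies a strategy for \emph{every} fixed $f$, so Alice simply selects the appropriate one once $\beta_1$ is in hand, and no genuine obstacle arises.
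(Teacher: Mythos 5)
Your proposal is correct and follows essentially the same route as the paper: first colour $G[V(G)\setminus V(H)]$ with an optimal strategy, then invoke Corollary~\ref{cor:allnotc} on $H$ to avoid, at each vertex of $H$, the colours already used on its neighbours across the cut, paying at most one extra request per cut edge. Your explicit definition of the forbidden-colour function $f$ and the remark that it is only fixed after the first phase are just slightly more detailed renderings of the paper's argument.
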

\begin{proof}
Alice begins by applying a strategy of length $\isc(G[V(G) \setminus V(H)])$ to obtain a proper colouring $\beta$ of $G[V(G)\setminus V(H)]$.  The goal is now to obtain a proper colouring $\gamma$ of $H$ so that combining $\beta$ and $\gamma$ gives a proper colouring of $G$.  Note that $\gamma$ can be any proper colouring of $H$ that satisfies the following additional condition: for any edge $uv$ with $u \in V(H)$ and $v \notin V(H)$, $\gamma(u) \neq \beta(v)$.  But we know from Corollary \ref{cor:allnotc} that Alice has a strategy to obtain such a colouring $\gamma$ for $H$ in at most $\isc(H) + |E(V(H),V(G)\setminus V(H)|$ rounds.  She can therefore now apply this strategy (after obtaining her initial proper colouring $\beta$) to obtain a proper colouring of $G$ after a total of at most $\isc(H) + \isc(G[V(G) \setminus V(H)]) + |E(V(H),V(G)\setminus V(H)|$ rounds, as required.
\end{proof}

We now make a simple observation about the disjoint union of two graphs.

\begin{rmk}
Let $G$ be the disjoint union of two graphs $G_1$ and $G_2$.  Then
$$\isc(G) = \isc(G_1) + \isc(G_2).$$
\label{rmk:disj-union}
\end{rmk}
\begin{proof}
Alice first applies her strategy to obtain a proper colouring of $G_1$, and then applies her strategy to obtain a proper colouring of $G_2$.
\end{proof}

Surprisingly, removing a single edge can make a relatively big difference. Recall that $\isc(K_p)=\sum_{i=1}^{p}i=\frac{p \cdot (p+1)}2$.

\begin{thm}\label{thm:cliques}
For every $p\geq 2$, if $e$ is any edge of the complete graph $K_p$, we have $\isc(K_p - e) \leq \frac{p \cdot (p+1)}2 - \frac{p-2}3$.
\end{thm}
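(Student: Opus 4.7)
The plan is to construct an explicit adaptive strategy for Alice whose worst-case length achieves the stated bound. Let $uv$ denote the non-edge of $K_p - e$ and let $W = V(K_p-e)\setminus\{u,v\}$ be the set of the $p-2$ ``normal'' vertices, each adjacent to every other vertex.

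As a warm-up, I would establish the baseline bound $\isc(K_p-e) \leq \tfrac{p(p+1)}{2}-2$ by iterating Lemma~\ref{lma:del-subg} one vertex at a time, starting from the base case $\isc(K_4-e) \leq 8$. This base case itself follows from Lemma~\ref{lma:del-subg} applied with the induced $P_3$ on $\{u,w_1,v\}$ together with $\isc(P_3)=4$ (which is where the non-edge first buys a saving over the greedy bound). The baseline already proves the theorem for $p \leq 8$, so the substantive work is to obtain an additional saving of roughly $(p-8)/3$ for $p \geq 9$.

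To extract that extra linear saving, I would partition $W$ into $k = \lfloor (p-2)/3 \rfloor$ triples (handling any 1--2 leftover normals greedily), process $u$ and $v$ with a single request each, and then handle the triples sequentially. For each triple $T = \{a,b,c\}$ of pairwise adjacent normals, I would design a sub-strategy that colours $T$ using one fewer request than the greedy count for three new vertices on top of the currently coloured subgraph. The per-triple saving relies on the non-edge $uv$: Alice makes an occasional planned extra request on $u$ (or $v$) at a carefully chosen moment, expanding $L(u)$ so that she has a delayed two-fold choice for $c(u)$. Applying Lemma~\ref{lma:notc} / Corollary~\ref{cor:allnotc} to reroute past one of Bob's forced bad responses inside the triple, and Observation~\ref{obs:combination} to stitch the sub-strategies together, the net effect of the planned extra request and the saving inside the triple is a gain of one request, which summed over the $k$ triples yields the linear saving of $\lfloor (p-2)/3 \rfloor$.

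The main obstacle will be rigorously verifying the per-triple saving against an adversarial Bob. Since Bob can always keep $L(u) \cap L(v) = \emptyset$, Alice cannot force $c(u) = c(v)$ (which would save $p-1$ outright); the argument must instead show that the weaker, two-fold choice of $c(u)$ afforded by a single extra $u$-request suffices to sidestep exactly one of Bob's forced bad responses inside each triple, and that Bob cannot defeat this saving across consecutive triples. A careful case analysis on Bob's responses inside each triple, combined with an interleaving schedule of the extra $u$-requests, will be needed. The hypothesis $p \geq 10$ presumably ensures that the number of triples is large enough for the amortised saving to outweigh the fixed overhead inherent in setting up the strategy, and matches the first $p$ for which $(p-2)/3$ strictly exceeds the baseline saving of $2$.
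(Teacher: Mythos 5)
Your proposal is not a proof: the step you yourself flag as ``the main obstacle'' --- that one well-timed extra request on $u$ yields a guaranteed net saving of one request per triple of clique vertices against every strategy of Bob --- is precisely the content of the theorem, and nothing in the plan substantiates it. Worse, the mechanism you propose is doubtful. The only slack in $K_p-e$ is the non-edge $uv$, and a two-element list on $u$ only helps if one of its colours interacts favourably with what happens on $W=V\setminus\{u,v\}$; but Bob controls both sides of that interaction. If you request the extra colour for $u$ after some vertices of $W$ are coloured, Bob can answer with a colour already used on $W$ (worthless to $u$, since $u$ is adjacent to \emph{all} of $W$); if you request it early, Bob can simply never offer $u$'s spare colours to vertices of $W$, so that dealing with each $W$-vertex at rank $i$ still costs up to $i-1$ extra requests and no per-triple saving materialises. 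Your accounting (``saving of two inside the triple, minus one planned request on $u$'') has no identified source for the two saved requests, and the interleaving schedule is not described concretely enough to be checked.

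The missing idea, by contrast with the paper's argument, is a dichotomy rather than an attempt to force anything. Alice invests $\frac{p-2}{3}$ extra requests in \emph{each} of $u$ and $v$ up front. If Bob lets $L(u)\cap L(v)\neq\emptyset$, she colours $u$ and $v$ alike and greedily handles the clique $W$, saving about $p$. If Bob keeps the lists disjoint --- which, as you correctly note, he can always do --- this is itself exploitable: then $|L(u)\cup L(v)|=\frac{2(p-2)}{3}+2$, so Alice can defer $u$ and $v$ until each has a single usable colour left, forcing them to occupy late (hence expensive) ranks in the greedy order while costing her no extra requests at all; the ranks they occupy sum to at least $p$, which more than repays the $\frac{2(p-2)}{3}$ up-front investment. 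Your plan discards the ``colour $u,v$ alike'' route on the grounds that Bob can avoid it, but does not replace it with the second half of the dichotomy, and the per-triple scheme offered instead would need a genuinely new argument (or is simply false as stated). The baseline computation via Lemma~\ref{lma:del-subg} giving $\isc(K_p-e)\leq\frac{p(p+1)}{2}-2$ is fine, but it only covers small $p$ and does not contribute to the linear saving required for $p\geq 10$.
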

\begin{proof}
Let $G$ be a graph isomorphic to $K_p-e$, and let $u$ and $v$ be the only non-adjacent vertices in $G$. Fix a function $\alpha : V(K_p-e) \rightarrow \mathbb{N}$. Alice requests $\frac{p-2}3$ extra colours on each of $u$ and $v$. We consider two cases depending on whether $L(u) \cap L(v) = \emptyset$. All along this, when we \emph{deal} with a vertex $x$, it means Alice requests as many extra colours as needed until there is a colour $c$ available for $x$ that does not appear on any coloured neighbour of $x$; Alice then colours $x$ with $c$.

\begin{itemize}
\item Assume that $L(u) \cap L(v) \neq \emptyset$. Alice colours $u$ and $v$ with the same colour $c$, and we deal with all the other vertices in an arbitrary order. In total, this $\alpha$-strategy involves at most $\frac{(p-2)\cdot (p-1)}2+2 \frac{p-2}3$ requests.
\item Assume now that $L(u) \cap L(v) = \emptyset$. Note that together, $u$ and $v$ have at least $|L(u)\cup L(v)|=2 \frac{p-2}3+2$ colours available. 

We deal with uncoloured vertices other than $u$ and $v$ until one of $u$ and $v$, say $u$, has only one colour available left (or all other vertices are coloured). Then we deal with $u$, and note that this does not require any extra request. We keep dealing with uncoloured vertices other than $v$ until $v$ has only one colour available left (or all other vertices are coloured): we deal with $v$ and keep going until every vertex is coloured. Consider the order $\mathcal{O}$ with which we dealt with vertices. We note that dealing with a vertex requires fewer than its rank in $\mathcal{O}$ extra requests, except for $u$ and $v$ which required no extra request.

Since $|L(u)|=|L(v)|=\frac{p-2}3+1$ and we only deal with $u$ and $v$ when they only have one colour available left or all other vertices are coloured, we have that the rank of each of $u$ and $v$ in $\mathcal{O}$ is at least $\frac{p-2}3+1$. Additionally, since $|L(u)\cup L(v)|=2\frac{p-2}3+2$, the last vertex of $u$ and $v$ that is dealt with is only considered after at least $2 \frac{p-2}3$ other vertices in $G$. Therefore, the sum of orders of $u$ and $v$ in $\mathcal{O}$ is at least $p$. In total, this $\alpha$-strategy involved at most $\sum_{i=1}^{p}(i-1)- (p-2)+2 \cdot \frac{p-2}3 = \frac{(p-2)(p-1)}{2} - (p-2)+ 2 \frac{p-2}{3}$ requests.
\end{itemize}

This gives us a strategy of length at most $\max(p+\frac{(p-2)\cdot (p-1)}2+2 \frac{p-2}3,p+\sum_{i=1}^{p}(i-1)- (p-2)+2 \cdot \frac{p-2}3)=\frac{p(p+1)}2-\frac{p-2}3$, hence the conclusion.
\end{proof}

\section{Graphs that are sc-greedy}
\label{sec:sc-greedy}

In this section we consider the interactive sum choice number for classes of graphs that are known to be sc-greedy.  We begin by showing that our conjecture holds for all sc-greedy graphs, before obtaining better bounds on the interactive sum choice number for trees and cycles.

\subsection{The general case of sc-greedy graphs that are not complete}

In this section we show that Conjecture \ref{conj:all-non-complete} holds for all graphs that are sc-greedy.  We begin by determining the interactive sum choice number of the path on 3 vertices.

\begin{lma}\label{lma:P_3}
Let $P_3$ denote the path on $3$ vertices.  Then $\isc(P_3) = 4$.
\end{lma}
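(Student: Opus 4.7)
Label the vertices $u, v, w$ with $v$ the central vertex.

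\textbf{Lower bound.} The largest stable set in $P_3$ is $\{u,w\}$, so $\alpha = 2$. Lemma~\ref{lma:stable} immediately gives
$$\isc(P_3) \geq 2 \cdot 3 - 2 = 4.$$

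\textbf{Upper bound.} The plan is to exhibit an explicit strategy of length $4$. By Observation~\ref{obs:oneeach} I may assume Alice's trace begins with $(u,v,w)$, so after three rounds each $L_x$ contains exactly one colour. If $G$ is already $L$-colourable the game has terminated. Otherwise, since the only constraints in a colouring of $P_3$ are the two edges $uv$ and $vw$, the remaining configurations are exactly those in which $L(u)=L(v)$ or $L(v)=L(w)$ (or both). I would split into three cases for Alice's fourth move:

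(i) $L(u)=L(v)=L(w)=\{c\}$: Alice requests $v$; Bob must return some $c'\neq c$, and then $(u,v,w)\mapsto(c,c',c)$ is a proper $L$-colouring.

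(ii) $L(u)=L(v)=\{c\}$ but $L(w)=\{c''\}\neq\{c\}$: Alice requests $u$; Bob must return some $c'\neq c$, and $(u,v,w)\mapsto(c',c,c'')$ is proper because $c'\neq c$ and $c\neq c''$ (the identity of $c'$ versus $c''$ is irrelevant since $u$ and $w$ are non-adjacent).

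(iii) $L(v)=L(w)=\{c\}$ but $L(u)\neq\{c\}$: symmetric to (ii), requesting $w$ instead.

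This yields a strategy of length at most $4$, giving $\isc(P_3) \leq 4$. No step here is difficult; the only observation that deserves verification is that in cases (ii) and (iii) Bob cannot thwart Alice by choosing a colour equal to the one on the far endpoint, because $u$ and $w$ are non-adjacent and so such a coincidence does not violate properness.
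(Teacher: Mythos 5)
Your proof is correct and follows essentially the same route as the paper: the lower bound via Lemma~\ref{lma:stable}, and the upper bound by reducing (via the first-round observations) to the situation where each vertex holds a single colour and then handling the same case split, requesting the centre when the endpoints agree and an endpoint otherwise. No gaps; the case analysis is exhaustive and the non-adjacency of the endpoints is used exactly as in the paper.
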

\begin{proof}
It follows immediately from Lemma \ref{lma:stable} that $\isc(P_3) \geq 4$, so it remains to prove that the reverse inequality holds.  By Observation \ref{obs:firstchoices} it suffices to demonstrate that for any $\alpha:V(P_3) \rightarrow \mathbb{N}$ there exists an $\alpha$-strategy of length at most $1$ for $P_3$.

We may assume without loss of generality that $\alpha$ is not a proper colouring of $P_3$.  Let us denote by $x,y,z$ the vertices of $P_3$, where $x$ and $z$ are the endpoints.  There are now two cases to consider.  If $\alpha(x) = \alpha(z)$ then Alice can obtain a proper colouring by requesting one more colour for $y$, and we are done.  If not, then without loss of generality we may assume that $\alpha(x) = \alpha(y) \neq \alpha(z)$; in this case Alice can obtain a proper colouring by requesting one more colour for $x$.  Thus we have shown that Alice has an $\alpha$-strategy of length $1$ in either case, completing the proof.
\end{proof}

We can apply the same ideas to obtain an upper bound on the interactive sum choice number of any graph that contains many vertex-disjoint induced copies of $P_3$.

\begin{lma}\label{lma:disjoint_P3s}
Let $G$ be a graph which contains at least $t$ vertex-disjoint induced copies of $P_3$.  Then $\isc(G) \leq |V(G)| + |E(G)| - t$.
\end{lma}
\begin{proof}
By Observation \ref{obs:firstchoices} it suffices to demonstrate that for every $\alpha:V(G) \rightarrow \mathbb{N}$, there exists an $\alpha$-strategy of length at most $|E(G)|- t$ for $G$. 
We proceed by induction on $t$. If $t=0$, the result follows directly from the fact that $\isc(G) \leq \xsc(G) \leq |V(G)| + |E(G)|$. Assume now that $t>0$ and that, for any graph $H$ which contains at least $t - 1$ vertex-disjoint copies of $P_3$, we have $\isc(H) \leq |V(H)| + |E(H)| - (t-1)$.

Let $\alpha:V(G) \rightarrow \mathbb{N}$.  Fix a set $\mathcal{P}$ of $t$ vertex-disjoint induced copies of $P_3$ in $G$, and choose an arbitrary element $P \in \mathcal{P}$.  We will denote the vertices of $P$ by $x, y$ and $z$, where $x$ and $z$ are not adjacent. It suffices to demonstrate that $G$ has an $\alpha$-strategy of length at most $|E(G)| - t$.  We consider two cases depending on whether $\alpha(x)=\alpha(y)=\alpha(z)$.
\begin{itemize}
\item Assume that $\alpha(x) \neq \alpha(y)$ or $\alpha(z) \neq \alpha(y)$. By symmetry, we can assume $\alpha(x) \neq \alpha(y)$.  Note that $G' = G \setminus \{x,y\}$ contains at least $t-1$ vertex-disjoint copies of $P_3$ so, by the inductive hypothesis, 
\begin{align*}
\isc(G') &\leq |E(G')| + |V(G')| - (t-1) \\
         &= |E(G)| - |E(\{x,y\},V(G'))|  - 1 + |V(G)| - 2 - (t-1)\\
         &= |E(G)| - |E(\{x,y\},V(G'))| + |V(G)| - 2 - t.
\end{align*}
Thus, for any $\beta: V(G) \rightarrow \mathbb{N}$, $G'$ admits a $\beta$-strategy of length at most $|E(G)| - |E(\{x,y\},V(G'))| - t$.  By Lemma~\ref{lma:notc}, we obtain that $G'$ has an $\alpha$-strategy of length at most 
$$|E(G)| - |E(\{x,y\},V(G'))| - t + |E(\{x,y\},V(G'))| = |E(G)| - t$$
such that no neighbour of $x$ (resp. $y)$ receives the colour $\alpha(x)$ (resp. $\alpha(y)$). It follows that $G$ has an $\alpha$-strategy of length at most $|E(G)| - t$, as required.
\item Assume now that $\alpha(x)=\alpha(y)=\alpha(z)$. Note that $G' = G \setminus \{x,z\}$ contains at least $t-1$ vertex-disjoint copies of $P_3$ so, by the inductive hypothesis, 
\begin{align*}
\isc(G') &\leq |E(G')| + |V(G')| - (t-1) \\
		 &= |E(G)| - |E(\{x,z\},V(G'))| + |V(G)| - 2 - (t-1).
\end{align*}
Thus, for any $\beta: V(G) \rightarrow \mathbb{N}$, $G'$ admits a $\beta$-strategy of length at most $|E(G)| - |E(\{x,z\},V(G'))| - (t-1)$.  Applying Lemma~\ref{lma:notc} again, we obtain that $G'$ has an $\alpha$-strategy of length at most 
$$|E(G)| - |E(\{x,z\},V(G'))| - (t-1) + |E(\{x,z\},V(G'))| - |N(x) \cap N(z)|$$ 
such that no neighbour of $x$ or $z$ receives the colour $\alpha(x)$.  Note that $|N(x) \cap N(z)| \geq 1$, as $y$ is a common neighbour of $x$ and $z$, so we see once again that $G$ has an $\alpha$-strategy of length at most $|E(G)| - t$, as required.
\end{itemize}
\end{proof}

It is now straightforward to show that our conjecture holds for all sc-greedy graphs.

\begin{cor}
Let $G$ be a graph which is not a disjoint union of complete graphs.  If $G$ is sc-greedy, then $\isc(G) < \xsc(G)$.
\end{cor}
\begin{proof}
By assumption, some connected component $C$ of $G$ is not a complete graph; therefore $C$ contains at least one copy of $P_3$ as an induced subgraph.  It then follows immediately from Lemma \ref{lma:disjoint_P3s} that $\isc(G) \leq |V(G)| + |E(G)| - 1 = \xsc(G) - 1$.
\end{proof}

\subsection{Trees}

In this section we discuss the interactive sum choice number of trees, which differs from the sum choice number by approximately half the number of vertices.

\begin{thm}\label{thm:trees}
Let $T$ be a tree on $n \geq 3$ vertices.  Then $\isc(T) \leq \lfloor \frac{3n}{2} \rfloor$.
\end{thm}
\begin{proof}
By contradiction. Take a minimal counter-example $T$, and consider $n$ its number of vertices. By Observation~\ref{obs:firstchoices}, let $\alpha : V(G) \rightarrow \mathbb{N}$ such that $T$ has no $\alpha$-strategy of length $\lfloor \frac{n}2 \rfloor$. 

Assume there is in $T$ a vertex $u$ adjacent to two leaves $v_1$ and $v_2$. Let $T'$ be $T \setminus \{v_1,v_2\}$. Note that by minimality of $T$, the tree $T'$ admits a strategy of length $\lfloor \frac{3n}2 \rfloor-3$. If $\alpha(v_1)=\alpha(v_2)$, by Lemma~\ref{lma:notc}, $T'$ admits a strategy of length $\lfloor \frac{3n}2 \rfloor -3+1$ so that $u$ is coloured at the end with a colour other than $\alpha(v_1)$. It follows that $T'$ admits an $\alpha$-strategy of length $\lfloor \frac{n}2 \rfloor$ so that $u$ is coloured at the end with a colour other than $\alpha(v_1)$. We apply this strategy on $T'$, which results in a proper colouring of $T$ with an $\alpha$-strategy of length $\lfloor \frac{n}2 \rfloor$, a contradiction. If $\alpha(v_1)\neq \alpha(v_2)$, we apply an $\alpha$-strategy of length $\lfloor \frac{n}2 \rfloor -1$ on $T'$ and consider the resulting colour $\beta$ of $u$. Assume w.l.o.g. that $\beta \neq \alpha(v_1)$. We request a new colour on $v_2$ if needed: this yields an $\alpha$-strategy for $T$ of length $\lfloor \frac{n}2 \rfloor$, a contradiction.

Therefore, no vertex in $T$ is adjacent to two leaves. Consequently, if $T$ contains more than $2$ vertices, there is a vertex $v$ with $N(v)=\{u,w\}$ where $w$ is a leaf. First assume that $\alpha(v) \neq \alpha(w)$. Then we apply an $\alpha$-strategy on $T \setminus \{v,w\}$ such that $u$ is coloured in the end with a colour other than $\alpha(v)$. By Lemma~\ref{lma:notc} and minimality of $T$, there is such a strategy of length $\lfloor \frac{n}2 \rfloor$. This yields a proper colouring of $T$ with a strategy of length $\lfloor \frac{n}2 \rfloor$, a contradiction. Therefore $\alpha(v)=\alpha(w)$. We apply an $\alpha$-strategy of length $\frac{n}2-1$ on $T \setminus \{v,w\}$, and consider the final colour $\beta$ of $u$. If $\beta=\alpha(v)$, we request a new colour for $v$, and note that by colouring $v$ with it, we obtain a proper colouring of $T$, hence we have a strategy of length $\lfloor \frac{n}2 \rfloor$. If $\beta \neq \alpha(v)$, we request a new colour for $w$, and note that by colouring $w$ with it, we obtain a proper colouring of $T$, hence the conclusion.

Thus $T$ has at most $2$ vertices, a contradiction.
\end{proof}

This bound is tight for paths: since the largest stable set in a path $P$ on $n$ vertices has size exactly $\lceil \frac{n}{2} \rceil$, it follows from Lemma \ref{lma:stable} that $\isc(P) \geq 2n - \lceil \frac{n}{2} \rceil = \lfloor \frac{3n}{2} \rfloor$.  However, the following result shows that we can make a significant improvement on Theorem~\ref{thm:trees} in the case of stars.

\begin{lma}\label{lma:stars}
$\isc(K_{1,p}) = p + q+1$, where $q= \max \{q \in \mathbb{N} | \frac{q*(q+1)}2 \leq p \}$.
\end{lma}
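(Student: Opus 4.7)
My plan is to establish the two matching inequalities $\isc(K_{1,p}) \le p+q+1$ and $\isc(K_{1,p}) \ge p+q+1$ separately, after a common reduction. By Observation~\ref{obs:firstchoices}, I will assume that Alice's first $p+1$ rounds query the centre $c$ and each leaf $\ell_1,\ldots,\ell_p$ exactly once; write $c_1$ for Bob's answer on $c$, $y_i$ for his answer on $\ell_i$, and set $b_x = |\{i : y_i = x\}|$ and $Z = \{x : b_x \ge 1\}$. A proper $L$-colouring now exists iff some $x \in L_c$ has $b_x = 0$, equivalently $L_c \not\subseteq Z$. In the remaining phase each Alice move is either an ``ask $c$'' (Bob enlarges $L_c$ by a colour $\notin L_c$, preferring one in $Z\setminus L_c$) or a ``free a leaf $\ell$'' (Bob adds a new colour to $L_\ell$, turning it non-singleton and dropping $b_{y_\ell}$ by one). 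The whole problem reduces to a minimax book-keeping game on the pair $(L_c,b)$.

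For the upper bound I argue that $q$ further rounds always suffice. Assume $c_1 \in Z$ (otherwise Alice is already done) and split into cases. (i) If $b_{c_1}\le q$, Alice frees the $b_{c_1}$ leaves with $y_i = c_1$ and sets $c(c) = c_1$. (ii) If $|Z\setminus\{c_1\}|\le q-1$, Alice repeatedly asks $c$; Bob can supply at most $|Z\setminus\{c_1\}|$ colours from $Z\setminus L_c$ before being forced to return some $c^* \notin Z$, which Alice takes. (iii) Otherwise $b_{c_1}\ge q+1$ and $|Z\setminus\{c_1\}|\ge q$; list the multiplicities in $Z\setminus\{c_1\}$ in non-increasing order as $b_{z_1}\ge b_{z_2}\ge\cdots$. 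A counting argument forces some $a \in \{1,\ldots,q-1\}$ with $b_{z_a}\le q-a$: otherwise $p = \sum_x b_x \ge (q+1) + \sum_{a=1}^{q-1}(q+1-a) + 1 = (q+1)(q+2)/2$, contradicting the maximality of $q$. Alice then asks $c$ exactly $a$ times; Bob's optimal reply is to deliver $z_1,\ldots,z_a$ (this maximises $\min_{x\in L_c} b_x = b_{z_a}$ and any other choice only helps Alice), so $z_a \in L_c$. She spends her remaining $q-a$ rounds freeing the $b_{z_a}\le q-a$ leaves coloured $z_a$ and sets $c(c) = z_a$.

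For the lower bound, Bob's strategy is to reply $c_1 = 1$ and, on each first-time leaf query, build the triangular distribution $b_1 = q$, $b_2 = q-1,\ldots,b_q = 1$ (feasible because $q(q+1)/2\le p$, with any surplus leaves also receiving colour $1$). After the setup one has $b_{c_1}\ge q$, $|Z\setminus\{c_1\}| = q-1$, and the sorted values are exactly $b_{z_a} = q-a$ for $a = 1,\ldots,q-1$. Running through the three cases above with only $r = q-1$ further rounds, each strictly fails: $b_{c_1}\ge q > r$, $|Z\setminus\{c_1\}| = q-1 > r-1$, and $b_{z_a} = q-a > q-1-a = r-a$ for every $a$. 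A routine induction on $r$ checks that these three cases are exhaustive for the book-keeping game, so Alice needs at least $q$ further rounds, yielding $\isc(K_{1,p})\ge p+q+1$. I expect the main obstacle to be case~(iii) of the upper bound: one must verify both the counting inequality and that Bob's adversarial choice during the $a$ asks delivers precisely the top-$a$ multiplicities, so that the ``ask-then-free'' budget $a + b_{z_a}\le q$ is tight against every Bob strategy.
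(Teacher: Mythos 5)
Your upper bound is correct and is essentially the paper's argument in a slightly different guise: the paper has Alice repeatedly request new colours for the centre, stopping as soon as some colour $c_i \in L(u)$ has at most $q-(i-1)$ leaves whose (disjoint) first-colour classes would need freeing, and derives the same contradiction $p \ge \frac{(q+1)(q+2)}{2}$ if this never happens; your version precomputes the index $a$ from the known multiplicities rather than adapting round by round, but it rests on the identical disjointness-plus-triangular-number count, and your verification that any Bob reply only lowers $\min_{x\in L_c} b_x$ is sound.

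The lower bound, however, has a genuine gap. You exhibit the right hard instance (the triangular distribution, the same one the paper uses), but the crucial step --- that \emph{no} Alice strategy, with arbitrary interleaving of centre-requests and leaf-frees and arbitrary choices of which leaves to free, finishes within $q-1$ further rounds --- is only asserted via ``a routine induction on $r$ checks that these three cases are exhaustive.'' The three cases are particular Alice macro-strategies from your upper bound, not an exhaustive classification of her options, so checking that each of them fails does not by itself rule out all strategies; the domination claim is exactly what must be proved. Moreover, Bob's replies to centre-requests in the residual game are left underspecified: ``preferring one in $Z\setminus L_c$'' is not enough, since if Bob serves a colour whose class has size $1$ Alice wins in two more rounds, so Bob must serve the colours in decreasing order of class size (the paper fixes his replies to be $2,3,\ldots,q$ in that order). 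With Bob pinned down this way, the missing argument is a short end-state count rather than an induction: if Alice uses $k$ centre-requests and hence at most $q-1-k$ leaf-frees, then $L_c\subseteq\{1,\ldots,k+1\}$ and every colour $i\le k+1$ still has at least $(q-i+1)-(q-1-k)\ge 1$ leaves whose list is the singleton $\{i\}$, so no colour of $L_c$ can be used on the centre. Adding this counting (and Bob's ordering) would complete your proof; as written, the lower half is not established.
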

\begin{proof}
We first prove the upper-bound, i.e. $K_{1,p}$ admits a strategy of length $p + q+1$. Let $\alpha$ be an assignment of a colour to each vertex. We will describe an $\alpha$-strategy of length $q$. Let $u$ be the vertex of degree $p$, and $v_1,\ldots,v_p$ be the vertices of degree $1$ (note that if $p=1$ then $\isc(K_{1,1})=\isc(P_2)$, which we know to be $3$). 

Let $c_1=\alpha(u)$. Consider the set $S_1$ of vertices $w$ in $\{v_1,\ldots,v_p\}$ satisfying $\alpha(w)=c_1$. If $|S_1|\leq q$, then we request a new colour for each vertex in $S_1$, and obtain a proper colouring in at most $p+q+1$ rounds. Therefore we can assume $|S_1| \geq q+1$.

We request a new colour for $u$, let $c_2$ be the colour Bob gives. Consider the set $S_2$ of vertices $w$ in $\{\ v_1,\ldots,v_p\}$ satisfying $\alpha(w)=c_2$. If $|S_2|\leq q-1$, then we request a new colour for each vertex in $S_2$, and obtain a proper colouring in at most $p+q+1$ rounds. Therefore we can assume $|S_2| \geq q$.

We can iterate until we find an $L$-colouring or reach the end of the $q^{th}$ round. Assume for contradiction that the strategy fails, i.e. we request $q$ new colours for $u$ but none of them allows us to obtain an $L$-colouring. Then we have for every $i$ from $1$ to $q+1$ that $|S_i|\geq q+2-i$. Note that the $S_i$'s are pairwise disjoint, as they correspond to different values taken by $\alpha$. In particular, we have $p \geq \sum_{i=1}^{q+1}(q+2-i)$, thus $p \geq \sum_{i=1}^{q+1} i = \frac{(q+1)\cdot (q+2)}2$, a contradiction to the choice of $q$. Consequently, the strategy does not fail, and we have an $\alpha$-strategy of length $q$, for every $\alpha$, hence $\isc(K_{1,p}) \leq p + q+1$.

Let us now argue that there is no strategy of length $p+q$. To do so, we will exhibit an assignment $\alpha$ of first choices and a strategy for Bob that will not allow for less than $q$ extra requests from Alice. Assign colour $1$ to $u$ and order the $v_i$'s arbitrarily: assign colour $1$ to the first $q$ vertices $v_i$'s, colour $2$ to the following $q-1$ vertices, colour $3$ to the next $q-2$ vertices, etc, colour $q$ to the next vertex, and colour the rest arbitrarily (note that there remains between $0$ and $q$ vertices). Alice has to obtain an $L$-colouring in $q-1$ rounds. 

Bob's strategy is to offer the colours $(2,\ldots,q)$ for $u$ (in that order), and arbitrary colours for other vertices. Assume Alice has a strategy in $q-1$ rounds, and let $0 \leq k \leq q-1$ be the number of new colours for $u$ that strategy requires. Alice has to finish in $q-k-1$ rounds. Note that for every colour $i$ in $1,\ldots,k+1$, there are at least $q-i+1 \geq q-k$ neighbours of colour $i$. Therefore, Alice's only hope of obtaining a proper colouring is to request new colours for at least $q-k$ neighbours of $u$: a contradiction.
\end{proof}

It is therefore tempting to think that we can use Lemma~\ref{lma:stars} to obtain a more refined bound in the case of trees that are not paths. However, we note that any tree $T$ that admits a perfect matching will satisfy $\isc(T) \geq \frac{3n}{2}$, as it suffices to see that there is no $\alpha$-strategy of length less than $\frac{n}2$ if $\alpha$ assigns the same colour to all the vertices in $T$. Note that a star $K_{1,p+1}$ where $p$ edges are subdivided is very far from being a path, yet admits a perfect matching. We can further note that even if the tree does not admit a perfect matching, the same argument shows that there is no strategy of length less than $n+k$ where $k$ is the size of a smallest vertex cover of $T$ (that is, of a subset $S$ of vertices such that $T-S$ induces a stable set). Nevertheless, this is still not the right bound for all trees, as stars admit a vertex cover of size $1$ and yet require much more than $n+1$ rounds in general. Since the dissemination of an earlier version of this paper, the case of trees has been fully resolved by Puleo and West~\cite{puleo19}.

\subsection{Cycles}

In this section we determine exactly the interactive sum choice number of cycles.

\begin{thm}
Let $C_n$ be the cycle on $n \geq 3$ vertices.  Then $\isc(C_n) = \lfloor \frac{3(n+1)}{2} \rfloor$.
\end{thm}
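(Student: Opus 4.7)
The plan is to prove matching upper and lower bounds separately. Observation~\ref{obs:firstchoices} reduces the upper bound $\isc(C_n) \leq \lfloor 3(n+1)/2 \rfloor$ to exhibiting, for every $\alpha : V(C_n) \to \mathbb{N}$, an $\alpha$-strategy of length at most $\lceil n/2 \rceil + 1$; and Lemma~\ref{lma:stable} already gives $\isc(C_n) \geq n + \lceil n/2 \rceil$, so for the lower bound we only need to force one extra request beyond the stability bound.

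For the upper bound, my plan is a ``greedy walk around the cycle''. When $\alpha$ is not constant, Alice selects an edge $v_n v_1$ with $\alpha(v_n) \neq \alpha(v_1)$ and processes $v_1, v_2, \ldots, v_n$ in order, requesting one new colour on $v_i$ precisely when $\alpha(v_i)$ matches the already-chosen colour of $v_{i-1}$, and invoking Lemma~\ref{lma:notc} once at the end to ensure the final colour of $v_n$ differs from $\alpha(v_1)$. Within a maximal monochromatic run of length $k$ this scheme costs $\lfloor k/2 \rfloor$ extras; Bob's adversarial boundary responses between consecutive runs can cost one additional request, and the closing wraparound correction at most one more; summing across runs yields at most $\lceil n/2 \rceil + 1$. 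When $\alpha$ is constant, I would treat the parities separately: for $n$ even Alice simply requests one extra colour on every other vertex (an independent vertex cover of size $n/2$, whose responses cannot collide because the chosen vertices are pairwise non-adjacent), and for $n$ odd Alice first requests an extra colour on an arbitrary vertex to ``cut'' the cycle and then runs the greedy walk on the resulting path, staying within $\lceil n/2 \rceil + 1$.

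For the lower bound, I would give Bob an explicit adversarial strategy, with the argument splitting by parity of $n$. For $n$ odd, Bob returns colour $1$ on every first request, so that $\alpha \equiv 1$; the set $W$ of vertices that ever receive a second colour must be a vertex cover of $C_n$ of size at least $\lceil n/2 \rceil$, and since every minimum vertex cover of an odd cycle contains some edge $uv$, Bob answers both of the second requests on $u$ and $v$ with colour $2$. Then any $L$-colouring must place $u$ and $v$ both in the singleton $\{2\}$, forcing Alice to make one further request to break the resulting conflict at $uv$. For $n$ even, Bob instead plays first colours producing $\alpha = (1,1,2,2,3,3,\ldots,n/2,n/2)$, whose monochromatic edges form a perfect matching; any hitting set of this matching has size at least $n/2$, and a short case analysis shows that Bob can always reply so as to leave either a pair $v_{2i-1}v_{2i}$ unavoidably monochromatic or a ``wraparound'' edge $v_{2i}v_{2i+1}$ unavoidably conflicting, yielding the extra $+1$.

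The main obstacle, in my view, lies in the upper bound: controlling the combined contribution of boundary clashes between consecutive monochromatic runs and of the closing-edge correction, and showing that the total is at most $1$ beyond the baseline $\sum_i \lfloor k_i/2 \rfloor \leq \lfloor n/2 \rfloor$. The critical tool here is Lemma~\ref{lma:notc}, which caps the cost of any single such correction at one request, and a careful argument is needed to verify that at most one such correction is in fact required across the whole cycle rather than one per run. The lower-bound case analysis is comparatively straightforward, though the $n$ even case genuinely requires the matching-like adversarial $\alpha$ above, since the all-monochromatic opening is defeated by an independent minimum vertex cover.
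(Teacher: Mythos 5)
Your lower bound is essentially the paper's own argument: for odd $n$ the all-$1$ opening plus the budget forces the set of twice-queried vertices to be a minimum vertex cover, which in an odd cycle contains an edge; for even $n$ the colouring $(1,1,2,2,\dots,\frac n2,\frac n2)$ and the independent-versus-adjacent dichotomy for the set of twice-queried vertices is exactly the paper's case analysis (your even case is only sketched, and the missing details are precisely what the paper supplies: if that set $U$ is independent, each $u\in U$ has neighbours $w_1,w_2\notin U$ with $\alpha(u)=\alpha(w_1)\neq\alpha(w_2)$, and Bob answers $\alpha(w_2)$).

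The upper bound, however, has a genuine gap. In your greedy walk, whenever $\alpha(v_i)$ equals the colour already committed to $v_{i-1}$ you make a single request at $v_i$ and are then forced to adopt Bob's reply (it is the only colour in $L(v_i)$ avoiding $v_{i-1}$). Bob can therefore cascade: he always answers with $\alpha(v_{i+1})$. Concretely, take $\alpha=(1,1,2,3,4,\dots,n-1)$, so $v_nv_1$ is a legitimate cut edge. Bob is asked at $v_2$ and answers $2=\alpha(v_3)$, is asked at $v_3$ and answers $3=\alpha(v_4)$, and so on; your scheme makes $n-1$ extra requests even though $\sum_i\lfloor k_i/2\rfloor=1$. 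So the claims ``each run of length $k$ costs $\lfloor k/2\rfloor$'' and ``boundary clashes plus the wraparound cost at most one more in total'' fail for the strategy as described, and this is not merely an unverified step: the strategy itself must change, because on a cycle every vertex has two neighbours, so there is no analogue of the endpoint move of Lemma~\ref{lma:P_3} that absorbs a clash for free (a clash vertex whose two neighbours have distinct pinned colours may need two requests). The paper sidesteps all of this: delete one vertex $v$, apply Corollary~\ref{cor:trees} to the path $C_n\setminus\{v\}$ to get $\lfloor 3(n-1)/2\rfloor$, and pay $1+2$ via Lemma~\ref{lma:del-subg} (Corollary~\ref{cor:allnotc} on the two crossing edges), giving $\lfloor 3(n+1)/2\rfloor$; the delicate clash accounting you worry about is done once and for all in the induction behind Theorem~\ref{th:scgreeedy-linear}, which removes two vertices of an induced $P_3$ at a time precisely to avoid the situation your walk runs into. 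As written, the bound $\lceil n/2\rceil+1$ per $\alpha$ is not established.
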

\begin{proof}
We first argue that $\isc(C_n) \leq \lfloor \frac{3(n+1)}{2} \rfloor$ for every $n$.  Let $v$ be an arbitrary vertex of $C_n$, and note that $C_n \setminus \{v\}$ is a path on $n-1$ vertices.  Thus, by Theorem~\ref{thm:trees} we know that $\isc(C_n \setminus \{v\}) \leq \lfloor \frac{3(n-1)}{2} \rfloor$.  It then follows from Lemma \ref{lma:del-subg} that
$$\isc(C_n) \leq \isc(C_n \setminus \{v\}) + 1 + 2 \leq \left \lfloor \frac{3(n-1)}{2} \right \rfloor + 3 = \left \lfloor \frac{3(n+1)}{2} \right \rfloor.$$

To see the reverse inequality, we consider separately the cases for even and odd $n$.  Suppose first that $n$ is odd, and suppose for a contradiction that Alice has a strategy of length at most $\lfloor \frac{3(n+1)}{2} \rfloor - 1 = \lfloor \frac{3n + 1}{2} \rfloor$.  Since Alice's strategy must work whatever colours Bob chooses, we may assume that the first time Alice requests a colour for any vertex she will be given colour $1$, and the second time she requests a colour for any vertex she will be given colour $2$. Note that, following the reasoning of Lemma \ref{lma:stable}, any strategy must request a second colour for every vertex in some set $U$ such that $V(C_n) \setminus U$ is a stable set, implying that $|U| \geq n - \frac{n-1}{2} = \frac{n+1}{2}$.  Thus our strategy can have length at most $\lfloor \frac{3n + 1}{2} \rfloor$ if and only if $|U| = \frac{n+1}{2}$, and we request exactly one colour for every vertex not in $U$ and two colours for every vertex in $U$, so every vertex outside $U$ ends up with colour list $\{1\}$ and every vertex in $U$ with colour list $\{1,2\}$.  Any proper colouring of $C_n$ respecting these colour lists would give a proper 2-colouring of $C_n$, giving contradiction since $n$ is odd.

Now suppose that $n$ is even, and denote the vertices of $C_n$ as $v_1,\ldots,v_n$ (where $v_iv_j$ is an edge if and only if $j-i \equiv 1 \mod{n}$), and that we have an initial colouring $\alpha$ such that $\alpha(v_{2i-1}) = \alpha(v_{2i}) = i$ for each $1 \leq i \leq n/2$.  Any $\alpha$-strategy must then request an additional colour for at least one vertex in $\{v_{2i-1},v_{2i}\}$ for each $i$, so every $\alpha$-strategy has length at least $n/2$; we can only obtain a strategy of length strictly shorter than $\lfloor \frac{3(n+1)}{2} \rfloor$ if there is an $\alpha$-strategy that involves requesting precisely one extra colour for exactly one vertex in each set $\{v_{2i-1},v_{2i}\}$.  Suppose, therefore, that there exists such an $\alpha$-strategy; denote by $U$ the set of vertices for which Alice requests an extra colour, and notice that every vertex in $U$ must ultimately be coloured with its second colour in order to obtain a proper colouring.  

For each $n+1 \leq i \leq \lfloor \frac{3n + 1}{2} \rfloor$, let $U'$ be the set of vertices on which Alice requests a second colour before round $\lfloor \frac{3n + 1}{2} \rfloor$.  Suppose that Alice selects vertex $v$ at round $i$; we define the following response for Bob:
\begin{itemize}
\item if $i < \lfloor \frac{3n + 1}{2} \rfloor$, add colour $\frac{n}{2} + 1$ to $L_v$ (note that colour $\frac{n}{2} + 1 \neq \alpha(w)$ for any $w \in V(C_n)$);
\item if $i = \lfloor \frac{3n + 1}{2} \rfloor$ and $v$ has a neighbour in $U'$, add $\frac{n}{2} + 1$ to $L_v$;
\item if $i = \lfloor \frac{3n + 1}{2} \rfloor$ and $N(v) = \{x,y\}$ with $x,y \notin U'$, add the unique colour in $\{\alpha(x),\alpha(y)\} \setminus \{\alpha(v)\}$ to $L_v$.
\end{itemize}
We claim that, if Bob always responds in this way, there is no proper $L$-colouring at the end of round $\lfloor \frac{3n + 1}{2}$, giving the required contradiction.  To see this, let $w$ be the vertex Alice chooses at round $\lfloor \frac{3n + 1}{2} \rfloor$, and recall that every vertex in $U$, including $w$, must ultimately be coloured with its second colour in any proper colouring.  If $w$ had a neighbour in $U'$ then there are two adjacent vertices in $U$ which are both assigned colour $\frac{n}{2} + 1$ as their second colour, giving a contradiction.  On the other hand, if $w$ had no neighbour in $U'$, then its neighbours will ultimately be coloured with $\alpha(x)$ and $\alpha(y)$ respectively (as each has a colour list of length one), but $L_w = \{\alpha(x),\alpha(y)\}$, again giving a contradiction.

Therefore Bob can always prevent Alice from obtaining a proper colouring after only $n + \frac{n}{2}$ rounds, and it follows that, when $n$ is even, $\isc(C_n) \geq n + \frac{n}{2} + 1  = \lfloor \frac{3(n+1)}{2} \rfloor$.
\end{proof}

\section{Graphs that are not sc-greedy}
\label{sec:not-greedy}

In this section, we extend our earlier results to show that Conjecture \ref{conj:all-non-complete} also holds for certain classes of graphs that are not sc-greedy.  We consider two families of bipartite graphs: complete (unbalanced) bipartite graphs, and grids.

\subsection{Complete bipartite graphs, the unbalanced case}

In this section we generalise Lemma \ref{lma:stars} to show that Conjecture \ref{conj:all-non-complete} holds for any complete bipartite graph in which the sizes of the two vertex classes are very different.  We will make use of the following result, proved by Kemnitz, Marangio and Voigt, regarding graphs containing a cycle; we will later show that the interactive version can beat this bound.

\begin{lma}[\cite{kemnitz17}]\label{lma:non-treesarelosers}
Let $G$ be a connected graph on $n$ vertices. If $G$ is not a tree, then $\xsc(G) \geq 2n$.
\end{lma}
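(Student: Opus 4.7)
The plan is to argue by strong induction on $n$, proving the contrapositive: if $G$ is connected on $n$ vertices and admits a choice function $f$ with $\sum_{v} f(v) \leq 2n - 1$, then $G$ is a tree. The base cases $n \leq 2$ are trivial. For the inductive step I would first observe that since $\sum_v f(v) \leq 2n - 1$ and each $f(v) \geq 1$, some vertex $v^{*}$ must satisfy $f(v^{*}) = 1$. I would then consider $G' := G - v^{*}$ with the reduced list-size function $f'(v) := f(v) - \mathbf{1}[v \sim v^{*}]$ for $v \in V(G')$.

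A short argument shows $f'$ is a choice function for $G'$: given any list assignment $L'$ on $V(G')$ with $|L'(v)| \geq f'(v)$, one extends it to a list assignment $L$ on $V(G)$ by picking a fresh colour $c \notin \bigcup_v L'(v)$, setting $L(v^{*}) = \{c\}$, and adjoining $c$ to the list of every neighbour of $v^{*}$; any proper $L$-colouring of $G$ (which exists since $f$ is a choice function for $G$) must assign $v^{*}$ the colour $c$ and therefore restricts to a proper $L'$-colouring of $G'$. A direct calculation then gives $\sum_{v} f'(v) = \sum_{v} f(v) - 1 - \deg(v^{*}) \leq 2(n-1) - \deg(v^{*})$. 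I would then split on whether $G'$ is connected. In the connected case, the inductive hypothesis (combined with the known identity $\xsc(T) = 2|V(T)| - 1$ for trees recorded in Section~\ref{sec:sum-choice}) forces $G'$ to be a tree and $\deg(v^{*}) \leq 1$, making $G$ itself a tree. In the disconnected case, letting $G'_1, \ldots, G'_k$ be the components of $G'$ of orders $n_1, \ldots, n_k$, additivity of $\xsc$ over disjoint unions gives $\sum_i \xsc(G'_i) \leq 2(n-1) - \deg(v^{*})$; then the bounds $\xsc(G'_i) \geq 2n_i - 1$ (strict for non-trees by induction), combined with the structural observation that every component of $G'$ must contain a neighbour of $v^{*}$ (so $k \leq \deg(v^{*})$), would be used to pin down $k = \deg(v^{*})$ and to conclude that every $G'_i$ is a tree. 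Reassembly then makes $G$ a tree as well.

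The main obstacle is the disconnected case, where one has to balance the deficit $2(n-1) - \deg(v^{*})$ against the lower bound $2(n-1) - k + \#\{\text{non-tree components}\}$; the slack disappears exactly when $k = \deg(v^{*})$ and all components are trees, and it is crucial that in that equality case every tree component is joined to $v^{*}$ by a single edge, so the reassembled graph has no cycle. Once this double inequality is teased out, the rest is bookkeeping and reduces to invoking the additivity of $\xsc$ over disjoint unions noted at the end of Section~\ref{sec:sum-choice}.
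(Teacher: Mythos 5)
Your proof is correct, but it follows a genuinely different route from the paper. The paper argues structurally: since $G$ is connected and not a tree it has a spanning connected subgraph $H$ with exactly $n$ edges (delete surplus cycle edges), and $\xsc(G) \geq \xsc(H)$ by subgraph monotonicity; $H$ is then decomposed into its unique cycle plus trees glued along single cut vertices, and the amalgamation identity of Berliner et al.\ ($\xsc(G_1 \cup G_2) = \xsc(G_1) + \xsc(G_2) - 1$ when $|V(G_1) \cap V(G_2)| = 1$), together with sc-greediness of trees and cycles, gives $\xsc(H) = |V(H)| + |E(H)| = 2n$. You instead run a vertex-deletion induction on the contrapositive: a choice function of total weight at most $2n-1$ forces a vertex $v^*$ with $f(v^*)=1$, your fresh-colour list-extension argument correctly shows that $f - \mathbf{1}[\cdot \sim v^*]$ is a choice function for $G - v^*$, and the component-counting inequality $k \leq \deg(v^*) \leq k - \#\{\text{non-tree components}\}$ pins down that $G-v^*$ is a forest of trees each joined to $v^*$ by a single edge. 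What each approach buys: the paper's proof is shorter given the cited machinery (sc-greediness of cycles and the one-point amalgamation theorem), while yours is more self-contained, needing only the tree identity $\xsc(T) = 2|V(T)|-1$ and additivity over disjoint unions, and it in effect proves the slightly stronger statement that every connected graph has $\xsc \geq 2n-1$ with near-equality forcing tree structure. One fine point worth adding in a careful write-up: the reduced function could formally take the value $0$ at a neighbour $v$ of $v^*$ with $f(v)=1$; a one-line remark (give $v$ and $v^*$ the same singleton list) shows this situation contradicts $f$ being a choice function, so $f' \geq 1$ everywhere and no degenerate case arises. This is a cosmetic omission, not a gap.
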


We now derive an upper bound on the interactive sum choice number for complete bipartite graphs.

\begin{thm}\label{thm:unbalancedcomplete}
For any integers $p$ and $q$, we have $\isc(K_{p,q})\leq p +q+ p^2\sqrt{2q}$.
\end{thm}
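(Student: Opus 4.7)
The plan is to invoke Observation~\ref{obs:firstchoices} and, for an arbitrary $\alpha : V(K_{p,q}) \to \mathbb{N}$, construct an $\alpha$-strategy of length at most $p^{2}\sqrt{2q}$. Write $A = \{u_1,\ldots,u_p\}$ and $B = \{v_1,\ldots,v_q\}$ for the two sides of the bipartition. The case $p = 1$ is already handled by Lemma~\ref{lma:stars}, so I assume $p \geq 2$ and fix the threshold $t := \lceil \sqrt{q/p}\rceil$. The $\alpha$-strategy will come in two phases: first Alice commits to colours $c_1,\ldots,c_p$ for the vertices of $A$, and then she enlarges enough lists in $B$ to guarantee that no vertex of $B$ is forced to use a colour from $\{c_1,\ldots,c_p\}$.

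In Phase~1 Alice processes $u_1,\ldots,u_p$ in order; for each $u_i$ she keeps requesting new colours until $L_{u_i}$ contains a \emph{safe} colour $c_i$, meaning a colour $c$ with $|\{j : \alpha(v_j) = c\}| \leq t$. The analysis hinges on the pigeonhole fact that the sets $\{v_j : \alpha(v_j) = c\}$ for distinct colours $c$ are pairwise disjoint: if all of the $r+1$ colours currently in $L_{u_i}$ were unsafe, we would have $(r+1)(t+1) \leq q$, which caps the number of extra requests per $u_i$ by $\lceil q/(t+1)\rceil \leq \sqrt{pq}$. Hence Phase~1 uses at most $p\sqrt{pq}$ extra requests in total. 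In Phase~2 call $v_j$ \emph{bad} if $\alpha(v_j) \in \{c_1,\ldots,c_p\}$; since each $c_i$ is safe there are at most $pt$ bad vertices. For every bad $v_j$ Alice requests $p$ additional colours, inflating $|L_{v_j}|$ to $p+1$ and thus forcing some colour outside $\{c_1,\ldots,c_p\}$ into $L_{v_j}$. This costs at most $p\cdot pt = p^{2}t$ additional requests. Alice then colours each $u_i$ with $c_i$ and each $v_j$ with an arbitrary element of $L_{v_j}\setminus\{c_1,\ldots,c_p\}$; as every edge of $K_{p,q}$ runs between $A$ and $B$, this is a proper $L$-colouring.

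Adding up, the two phases cost at most $p\sqrt{pq} + p^{2}t \leq 2p\sqrt{pq}$ (up to the small ceiling slack in $t$), and a short calculation shows $2p\sqrt{pq} \leq p^{2}\sqrt{2q}$ exactly when $p \geq 2$, which is our working range. Combining with the $p+q$ requests absorbed by Observation~\ref{obs:firstchoices} yields the claimed bound $p+q+p^{2}\sqrt{2q}$. The main technical obstacle is the balancing of $t$: Phase~1 costs roughly $pq/t$ and Phase~2 roughly $p^{2}t$, and the target bound forces the choice $t \approx \sqrt{q/p}$; the remaining slack from the strict inequality $2p\sqrt{pq} < p^{2}\sqrt{2q}$ for $p > 2$ comfortably absorbs rounding, and the $p = 1$ boundary case is covered separately by Lemma~\ref{lma:stars}.
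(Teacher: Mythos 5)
Your strategy (commit the centres $u_1,\dots,u_p$ to ``safe'' colours, then repair the at most $pt$ conflicting vertices of $B$) is sound in spirit and is genuinely different from the paper's argument, which instead proceeds by induction on $p+q$ with the threshold $r=\max\{r:\ r(r+1)/2\le q\}\le\sqrt{2q}$, peeling off a largest monochromatic class of $B$ and blocking its colour on all centres via Lemma~\ref{lma:notc}. Your balanced threshold $t\approx\sqrt{q/p}$ would in fact give a better count ($\sim 2p\sqrt{pq}$) than the paper's $\sim p^2\sqrt{2q}$ for large $q$ -- but this is exactly where your write-up has a genuine gap: the target $p^2\sqrt{2q}$ leaves essentially no slack over $2p\sqrt{pq}$, and your closing arithmetic does not survive the rounding. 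Concretely, $t=\lceil\sqrt{q/p}\rceil$ can exceed $\sqrt{q/p}$ by almost $1$, so $p^2t$ can exceed $p\sqrt{pq}$ by almost $p^2$, and the inequality $2p\sqrt{pq}\le p^2\sqrt{2q}$ is an \emph{equality} at $p=2$; your own absorption claim is stated only ``for $p>2$'', so the case $p=2$ is simply not covered, and for $p\ge 3$ the slack $p\sqrt{q}(\sqrt{2}p-2\sqrt{p})$ is smaller than $p^2$ whenever $q$ is small relative to $p$ (e.g.\ $p=3$, $q\le 14$). A concrete instance where your stated estimates overshoot: $p=2$, $q=3$ gives $t=2$, Phase~1 $\le 2$ and Phase~2 $\le p\cdot pt=8$, total $10>4\sqrt{6}\approx 9.8$. (The strategy itself is fine there, because the number of bad vertices is at most $q=3$, not $pt=4$; it is the written chain of inequalities that fails.)

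The gap is repairable, but it needs an actual argument rather than ``the slack comfortably absorbs rounding''. Two routes: (i) keep the floors -- the per-centre cost is at most the number of colours occurring at least $t+1$ times on $B$, hence at most $\lfloor q/(t+1)\rfloor$, and the number of bad vertices is at most $\min(pt,q)$ -- and verify $p\lfloor q/(t+1)\rfloor+p\min(pt,q)\le p^2\sqrt{2q}$, which at $p=2$ genuinely requires the floor savings since there is zero slack in the leading terms; or (ii) dispose of the small-$q$ regime trivially by noting that for $q\le 2p^2$ one has $pq\le p^2\sqrt{2q}$, so $\isc(K_{p,q})\le\xsc(K_{p,q})\le p+q+pq$ already gives the theorem, and then run your estimate only for $q>2p^2$, where the slack is real for $p\ge 3$ (the case $p=2$ still needs the floor or a slightly smaller choice of $t$). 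You should also say a word about $p>q$ (the paper reduces to $q\ge p$ by symmetry, which is legitimate since $q^2\sqrt{2p}\le p^2\sqrt{2q}$ when $q\le p$); as written you never address it.
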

\begin{proof}
We will generalize the proof of Lemma~\ref{lma:stars} to the case where, roughly, there are $p$ centers instead of just one. Let $r= \max \{r \in \mathbb{N} | \frac{r*(r+1)}2 \leq q \}$. Note that $r \leq \sqrt{2q}$. Let $\alpha$ be an assignment of a colour to each vertex. We will describe an $\alpha$-strategy of length $p^2 \cdot r$. Let $u_1,\ldots,u_p$ be the vertices of degree $q$, and $v_1,\ldots,v_q$ be the vertices of degree $p$.

We proceed by induction on $p+q$. Let $V_1$ be a largest subset of $\{v_1,\ldots,v_q\}$ that has the same image $c$ by $\alpha$. If $|V_1| \leq r$, there are no $r+1$ vertices in $\{v_1,\ldots,v_q\}$ with the same image by $\alpha$. Alice colours every $u_i$ with $\alpha(u_i)$, and requests an extra $p$ colours on every vertex $v_j$ with $\alpha(v_j) \in \{\alpha(u_i) | 1 \leq i \leq p\}$. This sums up to at most $p^2\cdot r$ requests, as required.

If $|V_1| \geq r+1$, Alice colours every vertex in $V_1$ with $c$, and uses Lemma~\ref{lma:notc} to guarantee that no vertex in $\{u_1,\ldots,u_p\}$ ever receives the colour $c$ as an option, making $p$ extra requests.  This ensures that if there is $\beta$-strategy on the remaining vertices of length at most $\ell$ for every $\beta$, then there is an $\alpha$-strategy on the initial graph of length at most $\ell+p$.  Note that if $V_1 = \{v_1,\ldots,v_q\}$ then we are already done, so we may assume that this is not the case and hence that $r + 1 < q$.   Applying the inductive hypothesis to $G' := G \setminus V_1$, we see that for any $\beta: V(G') \rightarrow \mathbb{N}$, Alice has a $\beta$-strategy on $G'$ of length $p^2 \sqrt{2(q-(r+1))}$.  We are therefore done if $\sqrt{2(q - (r+1))} \leq \sqrt{2q} - 1$, which holds if and only if $2 \sqrt{2q} \leq 2r + 3 \iff 8q \leq 4r^2 + 12 r + 9$.  We know by definition of $r$ that $8q \leq 8 \left(\frac{(r+1)(r+2)}{2} - 1\right) = 4r^2 + 12r$, giving the required bound.
\end{proof}

This gives the following immediate corollary.
\begin{cor}\label{cor:Kpq}
If $p \ll q$ then $\isc(K_{p,q}) < \xsc(G)$.
\end{cor}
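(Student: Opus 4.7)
The plan is to compare the upper bound on $\isc(K_{p,q})$ from Theorem~\ref{thm:unbalancedcomplete} with a lower bound on $\xsc(K_{p,q})$, and observe that the gap is positive once $q$ is sufficiently large relative to $p$.

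First, since the statement concerns only non-complete graphs of interest, we may assume $p, q \geq 2$, so that $K_{p,q}$ is a connected graph that is not a tree (it contains $C_4$ as a subgraph). Lemma~\ref{lma:non-treesarelosers} therefore applies and gives
\[
\xsc(K_{p,q}) \;\geq\; 2(p+q).
\]
Next, Theorem~\ref{thm:unbalancedcomplete} tells us that
\[
\isc(K_{p,q}) \;\leq\; p + q + p^{2}\sqrt{2q}.
\]

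Subtracting, we obtain $\xsc(K_{p,q}) - \isc(K_{p,q}) \geq p + q - p^{2}\sqrt{2q}$. The right-hand side is positive precisely when $q - p^{2}\sqrt{2q} > -p$, which is comfortably satisfied as soon as $q > 2p^{4}$ (then $p^{2}\sqrt{2q} < q$, hence the whole expression exceeds $p > 0$). This is exactly the regime captured by the informal relation $p \ll q$, so the corollary follows.

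The argument is essentially arithmetic once the two bounds are in hand, so there is no real obstacle; the only point that requires a moment of care is choosing the right lower bound on $\xsc(K_{p,q})$. One could alternatively invoke the Füredi--Kostochka bound $\xsc(K_{p,q}) \geq 2q + c_{1} p\sqrt{q\log p}$ (valid for $q \geq 4p^{2}\log p$), which would give a quantitatively much larger gap, but since the statement is only that $\isc < \xsc$ asymptotically, the weaker bound from Lemma~\ref{lma:non-treesarelosers} suffices and keeps the proof short.
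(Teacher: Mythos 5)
Your proof is correct and follows exactly the route the paper intends: the corollary is stated as immediate from Theorem~\ref{thm:unbalancedcomplete}, combined with the lower bound $\xsc(K_{p,q}) \geq 2(p+q)$ from Lemma~\ref{lma:non-treesarelosers}, which is precisely the comparison you carry out (your explicit threshold $q > 2p^4$ is a harmless quantification of ``$p \ll q$''). Nothing is missing.
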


It remains to understand how the interactive sum choice number behaves in the balanced case: we leave this is as an open question. Theorem~\ref{thm:unbalancedcomplete} can be marginally generalized, as follows.

\begin{thm}\label{thm:completejoin}
Let $G=(A \cup B,E)$ be a graph such that every vertex in $A$ is adjacent to every vertex in $B$. Then $\isc(G)\leq |A| +|B|+ (\max(\Delta(G[A]),\Delta(G[B]))+1)\cdot |A|^2\sqrt{2|B|}$.
\end{thm}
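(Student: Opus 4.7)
The plan is to adapt the proof of Theorem~\ref{thm:unbalancedcomplete} by replacing the notion of a ``largest monochromatic subset of $B$'' with that of a largest $\alpha$-monochromatic \emph{independent} subset of $B$ in $G[B]$, and paying additional cost for the edges in $G[A]$ and $G[B]$. Assume without loss of generality that $|A| \leq |B|$, let $\Delta = \max(\Delta(G[A]), \Delta(G[B]))$, and set $r = \max\{r \in \mathbb{N} : r(r+1)/2 \leq |B|\}$ so that $r \leq \sqrt{2|B|}$. I would proceed by induction on $|A|+|B|$, so that by Observation~\ref{obs:firstchoices} it suffices to bound the length of any $\alpha$-strategy. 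Let $V_1$ be a largest subset of $B$ that is both $\alpha$-monochromatic and independent in $G[B]$.

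If $|V_1| \geq r$ (``case A''), Alice selects an independent subset of $V_1$ of size exactly $r$, colours its vertices with their common $\alpha$-value $c$, and uses Corollary~\ref{cor:allnotc} to forbid $c$ as a final colour on every neighbour of this subset outside of it. These neighbours consist of all of $A$ (since $A$ is completely joined to $B$) together with at most $\Delta r$ vertices of $G[B]$, so the cost is at most $|A| + \Delta r$. She then recurses on the smaller graph, which has the same structure with $|B|$ reduced by at least $r$ and hence a strictly smaller value of the relevant threshold.

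If $|V_1| < r$ (``case B''), every $\alpha$-monochromatic subset of $B$ has independence number at most $r-1$ in $G[B]$; since $\Delta(G[B]) \leq \Delta$, each such subset has size at most $(\Delta+1)(r-1)$. Therefore the number of vertices $v_j \in B$ with $\alpha(v_j) \in \{\alpha(u_i) : u_i \in A\}$ is at most $|A|(\Delta+1)(r-1)$, a factor $(\Delta+1)$ larger than in Theorem~\ref{thm:unbalancedcomplete}, which accounts for the corresponding factor in the target bound. Alice colours each $u_i$ with $\alpha(u_i)$, first spending a few extras to break any $\alpha$-monochromatic edges of $G[A]$, and for each ``problematic'' $v_j$ she uses Corollary~\ref{cor:allnotc} to guarantee a final colour avoiding both the colours of $A$ and (to accommodate $G[B]$-adjacencies) the colours of $v_j$'s $G[B]$-neighbours.

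The main obstacle will be the book-keeping in the induction: the extra $\Delta r$ overhead in case A and the per-vertex overhead in case B need to be absorbed rather than compounding. The dominant contribution $|A|^2(\Delta+1)r$ from case B matches the target $(\Delta+1)|A|^2\sqrt{2|B|}$ since $r \leq \sqrt{2|B|}$, but the secondary $\Delta$-dependent corrections are delicate, particularly when $\Delta$ is large relative to $|A|$. I expect the cleanest route is to induct with a slightly slacker hypothesis (of the form $(\Delta+1)|A|^2 r + h(|A|,|B|,\Delta)$ for a well-chosen $h$ absorbing the lower-order terms), so that the telescoping through case A iterations behaves well.
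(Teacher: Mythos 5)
Your route is genuinely different from the paper's. The paper's proof simply reruns the strategy of Theorem~\ref{thm:unbalancedcomplete} verbatim and inflates every individual colour request by a factor $\max(\Delta(G[A]),\Delta(G[B]))+1$, paying nothing at all for vertices that never receive a request; you instead insist that $V_1$ be independent in $G[B]$ and explicitly forbid colours on $G[B]$-neighbours via Corollary~\ref{cor:allnotc}. However, your proposal is not yet a proof, and the bookkeeping you defer is exactly where it breaks. Writing $\Delta=\max(\Delta(G[A]),\Delta(G[B]))$ as you do: (i) in your case B only the vertices of $B$ whose $\alpha$-colour lies in $\alpha(A)$ are treated, so two $G[B]$-adjacent vertices sharing an $\alpha$-colour outside $A$'s palette are both left coloured with that common colour (take $G[B]$ a perfect matching and $\alpha$ giving each matched pair its own colour: no vertex is ``problematic'' in your sense, yet linearly many extra requests are needed), and widening ``problematic'' to cover such vertices destroys the bound $|A|(\Delta+1)(r-1)$ on their number; (ii) your case-A overhead of $\Delta r$ recurs once per iteration, and since each iteration only removes about $r$ vertices of $B$ there can be about $r$ iterations, so this overhead alone can reach order $\Delta r^2\approx 2\Delta|B|$, which is not of the form $(\Delta+1)|A|^2\sqrt{2|B|}$ when $|A|$ is small.

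These are not absorbable lower-order terms, and no slack function $h$ can rescue the induction, because the inequality as stated fails in precisely the regime you flag. Take $A=\{u\}$ and $G[B]$ a path on $m$ vertices, so $G$ is the fan on $m+1$ vertices and the claimed bound reads $\isc(G)\leq m+1+3\sqrt{2m}$. The largest stable set of the fan has $\lceil m/2\rceil$ vertices, so Lemma~\ref{lma:stable} gives $\isc(G)\geq 2(m+1)-\lceil m/2\rceil\geq \tfrac{3m}{2}+1$; equivalently, if Bob answers colour $1$ to every first request, the vertices of $B$ that receive a second request must form a vertex cover of $G[B]$. For $m\geq 73$ this lower bound exceeds $m+1+3\sqrt{2m}$, so the stated bound cannot be proved by any accounting; a correct statement must carry an extra term reflecting the edges inside $A$ and $B$ (for instance their vertex cover numbers), and the ``delicate $\Delta$-dependent corrections'' you postpone are exactly this unavoidable cost. (The same example shows that the paper's one-line inflation argument also breaks down on vertices that receive no request, so the obstacle you ran into is not an artefact of your particular route; your more careful treatment is what exposes it.)
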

\begin{proof}
We merely apply the same strategy as for Theorem~\ref{thm:unbalancedcomplete}. We simply replace every request for a new colour on some $u_i$ (which correspond here to vertices in $A$) with $\Delta(G[A])+1$ requests, so as to guarantee that no matter the colouring of the rest of $G$ there is always one out of the $(\Delta(G[A])+1)$ that is not used on a neighbour of $u_i$ in $G[A]$. We proceed symmetrically on $B$. This yields an $\alpha$-strategy for $G$ of length at most $(\max(\Delta(G[A]),\Delta(G[B])+1) \cdot (|A|^2 \sqrt{2|B|})$ for every $\alpha$, hence the conclusion.
\end{proof}

By Lemma~\ref{lma:non-treesarelosers} and Lemma~\ref{lma:del-edges}, we obtain in particular the following corollary.

\begin{cor}\label{cor:unbalanced-bipartition}
Let $G=(A \cup B,E)$ be a graph. If $(\max(\Delta(G[A]),\Delta(G[B]))+1)\cdot |A|^2\sqrt{2|B|}< |A|+|B|$, then $\isc(G) < \xsc(G)$.
\end{cor}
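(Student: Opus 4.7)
The plan is to sandwich $2(|A|+|B|)$ between $\isc(G)$ and $\xsc(G)$: the upper bound on $\isc(G)$ will come from Theorem~\ref{thm:completejoin} via a join-completion trick, and the lower bound on $\xsc(G)$ will come from Lemma~\ref{lma:non-treesarelosers}.

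For the upper bound, I would introduce $G^+$, the graph on the same vertex set $A \cup B$ obtained from $G$ by adding every missing edge between $A$ and $B$. Crucially, adding these cross-edges does not change the induced subgraphs on each side, so $G^+[A]=G[A]$ and $G^+[B]=G[B]$; in particular the two max-degree quantities appearing in Theorem~\ref{thm:completejoin} are the same for $G^+$ as for $G$. Since $G^+$ now satisfies the ``complete join'' hypothesis of that theorem, it yields
\[
\isc(G^+) \leq |A|+|B|+\bigl(\max(\Delta(G[A]),\Delta(G[B]))+1\bigr)\cdot |A|^2\sqrt{2|B|}.
\]
Because $G$ is a spanning subgraph of $G^+$, Lemma~\ref{lma:del-edges} specialised to $V(G^+)\setminus V(G)=\emptyset$ gives $\isc(G) \leq \isc(G^+)$, and the numerical hypothesis of the corollary then yields $\isc(G) < 2(|A|+|B|)$.

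For the lower bound, assuming $G$ contains at least one cycle, Lemma~\ref{lma:non-treesarelosers} gives $\xsc(G) \geq 2|V(G)| = 2(|A|+|B|)$. Chaining the two inequalities gives the desired strict separation $\isc(G) < \xsc(G)$. One may also want to assume, without loss of generality, $|A|\leq |B|$ in order to match the convention in the proof of Theorem~\ref{thm:unbalancedcomplete}.

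The main subtlety, rather than a genuine obstacle, is the tree case, which Lemma~\ref{lma:non-treesarelosers} does not cover. If $G$ is a tree on $n\geq 3$ vertices, one appeals instead to Corollary~\ref{cor:trees} together with the sc-greediness of trees to conclude $\isc(G) \leq \lfloor 3n/2 \rfloor < 2n-1 = \xsc(G)$, recovering the conclusion; the only residual trees are $K_1$ and $K_2$, which are complete and therefore fall outside the scope of Conjecture~\ref{conj:all-non-complete}. Apart from this bookkeeping, the argument is mechanical once one observes that padding $G$ to the $A$--$B$ join is harmless for $\isc$ (via Lemma~\ref{lma:del-edges}) while leaving the right-hand side of Theorem~\ref{thm:completejoin} unchanged.
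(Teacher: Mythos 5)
Your argument is exactly the paper's intended derivation: complete the missing $A$--$B$ edges, apply Theorem~\ref{thm:completejoin} (whose degree parameters are unchanged since only cross-edges are added), pull back to $G$ via Lemma~\ref{lma:del-edges}, and compare with the lower bound $\xsc(G)\geq 2(|A|+|B|)$ from Lemma~\ref{lma:non-treesarelosers}. Your separate treatment of the tree case (via Corollary~\ref{cor:trees} and sc-greediness of trees, with $K_1,K_2$ excluded as complete graphs) is a sensible extra precaution that the paper's one-line justification glosses over, but it does not change the approach.
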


Corollary~\ref{cor:unbalanced-bipartition} also implies that fans on at least 19 vertices satisfy Conjecture~\ref{conj:all-non-complete}, by Lemma~\ref{lma:non-treesarelosers}.

\subsection{Grids}

In this section we demonstrate that the interactive sum choice number is strictly smaller than the sum choice number of the $k \times \ell$ grid $G_{k,\ell}$ for any positive integers $k$ and $\ell$; in fact we prove the following result.

\begin{thm}
Let $G_{k,\ell}$ denote the $k \times \ell$ grid, where $k \leq \ell$, and suppose that $\ell \geq 3$.  Then
$$\xsc(G_{k,\ell}) - \isc(G_{k,\ell}) \geq \frac{1}{18}k \ell.$$
\label{thm:grid-diff}
\end{thm}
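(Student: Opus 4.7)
The plan is to partition $G_{k,\ell}$ into $3 \times 3$ subgrids for controlling $\isc$ from above, and to view $G_{k,\ell}$ as a spanning supergraph of disjoint horizontal $3 \times \ell$ strips for lower-bounding $\xsc$. The key input will be a good upper bound on $\isc(G_{3,3})$, substantially smaller than $\xsc(G_{3,3}) = 20$ (which follows from Theorem~\ref{thm:p3pn} with $n=3$).

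First I would prove $\isc(G_{3,3}) \leq 16$. By Observation~\ref{obs:firstchoices}, this reduces to exhibiting an $\alpha$-strategy of length at most $7$ for every initial colouring $\alpha$ of the nine vertices. The argument is a case analysis driven by the ``conflict subgraph'' of $\alpha$. The only $\alpha$'s threatening the bound are those in which the degree-$4$ centre $(2,2)$ lies in conflicts and its neighbours take multiple colours (e.g.\ $\alpha$ is constant on each column with adjacent columns of different colours); for these Alice places several of her remaining requests on $(2,2)$, so that the three or more new colours Bob supplies there cannot simultaneously coincide with the colours that Bob has chosen for the other recoloured vertices. The remaining cases are easier and handled by standard vertex-cover arguments.

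With that lemma in hand, and assuming first that $3 \mid k$ and $3 \mid \ell$, I would partition $V(G_{k,\ell})$ into $k\ell/9$ disjoint $3 \times 3$ blocks and apply Lemma~\ref{lma:del-subg} iteratively to get
\[
\isc(G_{k,\ell}) \;\leq\; \tfrac{16 k\ell}{9} + \bigl(\tfrac{2k\ell}{3} - k - \ell\bigr) \;=\; \tfrac{22 k\ell}{9} - k - \ell.
\]
For the lower bound on $\xsc$, the disjoint union of the $k/3$ horizontal $3\times \ell$ strips is a spanning subgraph of $G_{k,\ell}$; by monotonicity of $\xsc$ in subgraphs, additivity on disjoint unions, and Theorem~\ref{thm:p3pn},
\[
\xsc(G_{k,\ell}) \;\geq\; \tfrac{k}{3}\bigl(8\ell - 3 - \lfloor \ell/3 \rfloor\bigr) \;\geq\; \tfrac{23 k\ell}{9} - k.
\]
Subtracting yields $\xsc(G_{k,\ell}) - \isc(G_{k,\ell}) \geq \tfrac{k\ell}{9} + \ell \geq \tfrac{k\ell}{18}$, as required.

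When $k$ or $\ell$ is not a multiple of $3$, I would cover the sub-rectangle of $3\lfloor k/3\rfloor$ rows and $3\lfloor \ell/3 \rfloor$ columns with the blocks and strips above, then absorb the at most two residual rows and two residual columns into both bounds using Lemma~\ref{lma:del-subg} (for $\isc$) and monotonicity of $\xsc$ (for the lower bound); these corrections are $O(k+\ell)$ and are absorbed by the surplus $\tfrac{k\ell}{18} + \ell$ as long as $\ell \geq 3$. The main obstacle is the very first step: establishing $\isc(G_{3,3}) \leq 16$ is a finite but delicate case analysis, since Bob can tailor both the first-round colours and the later adversarial colours to conflict with Alice's choices at the degree-$4$ centre.
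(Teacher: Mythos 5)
Your lower bound on $\xsc(G_{k,\ell})$ via spanning $3\times\ell$ strips and Theorem~\ref{thm:p3pn} is exactly the paper's route (Lemma~\ref{lma:grid-sc}), but your upper bound on $\isc$ rests entirely on the claim $\isc(G_{3,3})\leq 16$, and this is precisely where the proposal has a genuine gap: that inequality is nowhere proved. It is not available from anything in the paper (the general machinery only gives $\isc(G_{3,3})\leq 18$ via Lemma~\ref{lma:grid-ub}-type path decompositions, while Lemma~\ref{lma:stable} only gives $\isc(G_{3,3})\geq 13$), and your sketch --- ``place several requests on the centre'' plus ``standard vertex-cover arguments'' --- does not engage with the real difficulty, namely that Bob chooses the later colours adversarially as a function of Alice's requests, so one must exhibit an explicit $\alpha$-strategy of length $7$ for \emph{every} $\alpha$ and every adversarial continuation. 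You acknowledge this is ``the main obstacle,'' but an acknowledged obstacle is still an unproved central lemma; the paper deliberately avoids any such ad hoc finite analysis by decomposing the grid into rows, columns or L-shaped paths and invoking only the proven bound $\isc(P_n)\leq\lfloor 3n/2\rfloor$ (Corollary~\ref{cor:trees}) together with Lemma~\ref{lma:del-subg}.

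There are two further problems with the reduction even granting the $3\times 3$ lemma. First, the cases $k=1$ and $k=2$ (allowed by the statement, since only $k\leq\ell$ and $\ell\geq 3$ are assumed) are not covered: there is no $3\times 3$ block at all, and your ``residual rows'' device does not apply; the paper treats these cases separately (paths, and the sc-greedy $P_2\square P_\ell$). Second, for $k\geq 3$ with $3\nmid k$ or $3\nmid\ell$, discarding the residual rows/columns in the lower bound (by monotonicity) while paying their full $\isc$ plus the cut edges in the upper bound costs $\Theta(k+\ell)$, roughly $6(k+\ell)$, whereas the surplus you have over the target $\frac{1}{18}k\ell$ in the divisible case is only about $\frac{1}{18}k'\ell'+\ell'$; for moderate $k$ this surplus does not absorb the cost, so the claimed ``corrections are $O(k+\ell)$ and are absorbed'' step fails as stated. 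The fix is to make the residual strips contribute to the lower bound as well (spanning $P_2\square P_\ell$ strips, as in the paper's $H_{r,s}$), but that needs to be said and computed, not waved through.
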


Note first that the result follows immediately from Theorem~\ref{thm:trees} in the case that $k = 1$, as in this case $G_{k,\ell}$ is in fact a path on $\ell$ vertices.

If $k=2$, then $G_{k,\ell}$ is a tree of cycles and so is known to be sc-greedy \cite{heinold-thesis,lastrina13}; thus $\xsc(G_{k,\ell}) = 5 \ell - 2$.  On the other hand, by regarding $G_{k,\ell}$ as two paths of length $\ell$ with a total of $\ell$ cross-edges between the two paths, we can apply Lemma \ref{lma:del-subg} to see that $\isc(G_{k,\ell}) \leq 2 \lfloor \frac{3 \ell}{2} \rfloor + \ell \leq 4 \ell$, so $\xsc(G_{k,\ell}) - \isc(G_{k,\ell}) = \ell - 2 > \frac{1}{18}k \ell$ since $\ell \geq 3$.  Thus the theorem holds when $k = 2$.

The next two lemmas complete the proof of Theorem \ref{thm:grid-diff}.  We begin by giving a lower bound on the sum choice number of grids.

\begin{lma}\label{lma:grid-sc}
Let $G_{k,\ell}$ denote the $k \times \ell$ grid, and suppose that $ \ell \geq k \geq 3$.  Then $\xsc(G_{k,\ell}) \geq \frac{23}{9}k \ell - \frac{2}{9}k \ell - k$.
\end{lma}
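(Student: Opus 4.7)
The natural plan is to exhibit a spanning subgraph of $G_{k,\ell}$ whose sum choice number is easy to compute and still large enough. Write $k = 3m + r$ with $r \in \{0,1,2\}$, partition the $k$ rows into $m$ blocks of three consecutive rows plus one remaining block of $r$ rows, and delete every edge between distinct blocks. Since $\xsc$ is monotone under taking subgraphs and additive over disjoint unions (both noted at the end of Section~\ref{sec:sum-choice}), this yields
\[
\xsc(G_{k,\ell}) \;\geq\; m \cdot \xsc(P_3 \square P_\ell) \;+\; \xsc(P_r \square P_\ell),
\]
where $\xsc(P_0 \square P_\ell)$ is taken to be $0$.

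Next I would plug in the known values. Theorem~\ref{thm:p3pn} gives $\xsc(P_3 \square P_\ell) = 8\ell - 3 - \lfloor \ell/3 \rfloor$, and using $\lfloor \ell/3 \rfloor \leq \ell/3$ this is at least $\frac{23\ell}{3} - 3$. For the remainder strip, $P_1 \square P_\ell$ is a path and hence a tree, so $\xsc(P_1 \square P_\ell) = 2\ell - 1$; and $P_2 \square P_\ell$ is a tree of cycles, therefore sc-greedy, so $\xsc(P_2 \square P_\ell) = 5\ell - 2$. In the divisible case $r = 0$ one obtains $\xsc(G_{k,\ell}) \geq \frac{23 k \ell}{9} - k$ directly, which is already stronger than the claimed bound $\frac{23}{9} k \ell - \frac{2}{9} k \ell - k$. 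For $r \in \{1, 2\}$, substituting $m = (k-r)/3$ and collecting terms gives lower bounds of $\frac{23 k \ell}{9} - k - \frac{5\ell}{9}$ and $\frac{23 k \ell}{9} - k - \frac{\ell}{9}$ respectively; comparing each with the target $\frac{23 k \ell}{9} - \frac{2 k \ell}{9} - k$ reduces to checking $\frac{2 k \ell}{9} \geq \frac{5\ell}{9}$ and $\frac{2 k \ell}{9} \geq \frac{\ell}{9}$, both of which hold whenever $k \geq 3$.

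There is no substantive obstacle here; the only care needed is in the residue-by-residue algebra, and in confirming that the $-\frac{2}{9} k \ell$ slack in the target bound is exactly what absorbs the inefficiency of the short remainder strip when $k$ is not a multiple of $3$. The hypothesis $k \geq 3$ is used precisely to make this slack sufficient, and the borderline cases $k \in \{1, 2\}$ have already been dispatched in the paragraphs immediately preceding the lemma.
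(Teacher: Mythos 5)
Your proposal is correct and follows essentially the same route as the paper: exhibit a spanning subgraph that is a disjoint union of width-$3$ strips $P_3 \square P_\ell$ plus a remainder, invoke monotonicity of $\xsc$ under subgraphs and additivity over disjoint unions, plug in Theorem~\ref{thm:p3pn} and the sc-greedy values, and finish with residue-by-residue algebra. The only (cosmetic) difference is that the paper pads with copies of $P_2 \square P_\ell$ — two of them when $k \equiv 1 \pmod 3$ — instead of your single width-$r$ remainder strip, which makes its worst-case constant marginally sharper ($-\tfrac{2}{9}\ell$ versus your $-\tfrac{5}{9}\ell$), but both comfortably imply the bound as stated since $k \geq 3$.
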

\begin{proof}
For any pair of non-negative integers $r$ and $s$, we define $H_{r,s}$ to be the graph consisting of the disjoint union of $r$ copies of $P_2 \square P_{\ell}$ and $s$ copies of $P_3 \square P_{\ell}$.  Now observe that for any positive integer $k \geq 3$, there exists $r \in \{0,1,2\}$ and a non-negative integer $s$ such that $G_{k,\ell}$ contains $H_{r,s}$ as a spanning subgraph.  Thus we know that $\xsc(G_{k,\ell}) \geq \xsc(H_{r,s})$, so it suffices to prove that the lower bound holds for $\xsc(H_{r,s})$ whenever $r \in \{0,1,2\}$ and $s \in \mathbb{Z}_0^+$ satisfy $2r + 3s = k$.

Recall (Theorem \ref{thm:p3pn}) that $\xsc(P_3 \square P_{\ell}) = 8 \ell - 3 - \lfloor \frac{\ell}{3} \rfloor \geq 8 \ell - 3 - \frac{\ell}{3}$, and that (as $P_2 \square P_{\ell}$ is sc-greedy) $\xsc(P_2 \square P_{\ell}) = 5 \ell - 2$.  Thus we see that
\begin{align*}
\xsc(H_{r,s}) & = r \left( \xsc(P_2 \square P_{\ell}) \right) + s \left( \xsc(P_3 \square P_{\ell}) \right) \\
              & \geq r (5 \ell - 2) + s \left(8 \ell - 3 - \frac{\ell}{3} \right) \\
              & = 5 r \ell - 2 r + 8 \ell s - 3 s - \frac{1}{3} \ell s \\
              & = \frac{23}{3} \ell s + 5 \ell r - 3 s - 2 r.
\end{align*}
There are now three cases to consider, depending on the value of $r$.

If $r = 0$ then $s = \frac{k}{3}$, and we have
\begin{align*}
    \xsc(H_{r,s}) & = \frac{23}{3} \ell \frac{k}{3}  - 3 \frac{k}{3} \\
                  & = \frac{23}{9} k \ell - k \\
                  & > \frac{23}{9}k \ell - \frac{2}{9}k \ell - k,
\end{align*}
as required.

Next suppose that $r = 1$, so $s = \frac{k-2}{3}$.  In this case we have
\begin{align*}
    \xsc(H_{r,s}) & = \frac{23}{3} \ell \frac{k-2}{3} + 5 \ell - 3 \frac{k-2}{3} - 2 \\
                  & = \frac{23}{9} k \ell - \frac{46}{9} \ell + 5 \ell - k + 2 - 2 \\
                  & = \frac{23}{9} k \ell - \frac{1}{9} \ell - k \\
                  & > \frac{23}{9}k \ell - \frac{2}{9}k \ell - k,
\end{align*}
as required.

Finally, suppose that $r = 2$ and hence $s = \frac{k-4}{3}$.  In this case,
\begin{align*}
    \xsc(H_{r,s}) & = \frac{23}{3} \ell \frac{k-4}{3} + 10 \ell - 3 \frac{k-4}{3} - 4 \\
                  & = \frac{23}{9} k \ell - \frac{92}{9} \ell + 10 \ell - k + 4 - 4 \\
                  & = \frac{23}{9} k \ell - \frac{2}{9} \ell - k,
\end{align*}
completing the proof.
\end{proof}

To complete the proof of Theorem \ref{thm:grid-diff}, we now derive an upper bound on the interactive sum choice number of grids by applying our knowledge about the interactive sum choice number of paths inductively.

\begin{lma}\label{lma:grid-ub}
Let $G_{k,\ell}$ denote the $k \times \ell$ grid, and assume that $\ell \geq k \geq 3$.  Then $\isc(G_{k,\ell}) \leq \frac{5}{2}\ell k - \frac{\ell}{2} - k$.
\end{lma}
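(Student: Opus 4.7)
The plan is to decompose $G_{k,\ell}$ into horizontal blocks of two consecutive rows (and one extra single-row block if $k$ is odd), and glue the blocks together using Lemma~\ref{lma:del-subg}. Each two-row block is an induced copy of $G_{2,\ell}$ and each single-row block is an induced path $P_\ell$; consecutive blocks are joined by exactly $\ell$ cross-edges in the grid. The critical ingredient is a sharp bound on $\isc(G_{2,\ell})$: the graph $G_{2,\ell}$ is a tree of cycles and hence sc-greedy with $\xsc(G_{2,\ell}) = 5\ell - 2$, and it is bipartite so $\omega(G_{2,\ell}) = 2$. Applying Theorem~\ref{th:scgreeedy-linear} therefore gives
\[\isc(G_{2,\ell}) \leq (5\ell - 2) - \frac{2\ell - 2}{2} = 4\ell - 1.\]
For the single-row block I will use $\isc(P_\ell) \leq \lfloor 3\ell/2 \rfloor$ from Corollary~\ref{cor:trees}.

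Iteratively applying Lemma~\ref{lma:del-subg} with the bottom $k-2$ rows as the induced subgraph gives, for $k \geq 4$,
\[\isc(G_{k,\ell}) \leq \isc(G_{k-2,\ell}) + \isc(G_{2,\ell}) + \ell,\]
which I will unroll down to the smallest remaining block. When $k = 2r$ is even the recursion bottoms out at $G_{2,\ell}$ and yields $\isc(G_{k,\ell}) \leq r(4\ell - 1) + (r-1)\ell = \frac{5k\ell}{2} - \frac{k}{2} - \ell$. When $k = 2r+1$ is odd it bottoms out at $P_\ell$ and yields $\isc(G_{k,\ell}) \leq r(4\ell - 1) + \lfloor 3\ell/2 \rfloor + r\ell$.

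Comparing each expression to the target $\frac{5k\ell}{2} - \frac{\ell}{2} - k$ is a routine arithmetic check. For $k$ even the inequality collapses to $k \leq \ell$, which is assumed. For $k$ odd I split on the parity of $\ell$: when $\ell$ is odd it again reduces to $k \leq \ell$; when $\ell$ is even it reduces to $k + 1 \leq \ell$, which is automatic because $k$ odd and $\ell$ even forbid $k = \ell$.

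The step I expect to be the main obstacle is identifying the correct granularity of the decomposition. A naive peeling, one row at a time, would give only $\isc(G_{k,\ell}) \leq \isc(G_{k-1,\ell}) + \isc(P_\ell) + \ell$; this overshoots the allowed per-row increment of $\frac{5\ell-2}{2}$ by essentially one, so after $k$ iterations it loses $\Theta(k)$ against the target, and no obvious strengthening of the path bound closes the gap. The trick is that pairing rows into $G_{2,\ell}$-blocks lets Theorem~\ref{th:scgreeedy-linear} contribute a saving of $1$ per block (relative to the greedy $\xsc$), and this saving is precisely what makes the whole accounting balance.
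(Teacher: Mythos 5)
Your proof is correct, but it takes a genuinely different route from the paper's. The paper decomposes the grid entirely into paths and uses only the bound $\isc(P_n)\leq\lfloor 3n/2\rfloor$: rows when $\ell$ is odd, columns when $k$ is odd, and, when both are even, $k$ ``L-shaped'' paths of odd lengths $\ell-k+1,\ell-k+3,\dots,\ell+k-1$, the odd lengths being exactly what recovers the needed half-savings per path. You instead peel off two rows at a time, bounding each ladder block via $\isc(P_2\square P_\ell)\leq 4\ell-1$, which you get by combining the sc-greediness of $P_2\square P_\ell$ (so $\xsc=5\ell-2$) with Theorem~\ref{th:scgreeedy-linear} and $\omega=2$; gluing with Lemma~\ref{lma:del-subg} costs $\ell$ cross-edges per junction, and the leftover row when $k$ is odd is handled by Corollary~\ref{cor:trees}. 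I checked your bookkeeping: $\isc(G_{2,\ell})\leq(5\ell-2)-\frac{2\ell-2}{2}=4\ell-1$ is right, the even case gives $\frac{5k\ell}{2}-\frac{k}{2}-\ell\leq\frac{5k\ell}{2}-\frac{\ell}{2}-k$ iff $k\leq\ell$, and the odd-$k$ case reduces to $k\leq\ell$ (for $\ell$ odd) or $k+1\leq\ell$ (for $\ell$ even), the latter holding because $k$ and $\ell$ then have different parities. What each approach buys: yours avoids the ad hoc L-shaped construction and the three-way parity case analysis, at the price of invoking the heavier Theorem~\ref{th:scgreeedy-linear}, whose saving of one unit per ladder (relative to joining two paths when $\ell$ is even) is precisely the slack that the naive row-by-row peeling lacks, as you correctly observe; the paper's argument is more hands-on, needing only the path bound, but must engineer odd path lengths case by case to achieve the same effect.
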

\begin{proof}
There are three different cases to consider, depending on the parities of $\ell$ and $k$; in each case we will decompose $G_{k, \ell}$ into a number of paths and apply Lemma \ref{lma:del-subg} together with the fact that $\isc(P_n) = \left\lfloor \frac{3n}{2} \right\rfloor$.

For the first case, suppose that $\ell$ is odd.  In this case we consider decomposing $G_{k,\ell}$ into $k$ paths of length $\ell$ (each corresponding to one row of the grid), with $\ell$ edges between each pair of consecutive paths.  Applying Lemma \ref{lma:del-subg} repeatedly, we see that
\begin{align*}
    \isc(G_{k,\ell}) & \leq k \lfloor \frac{3 \ell}{2} \rfloor + (k-1)\ell \\
                     & = k \left(\ell + \frac{\ell - 1}{2}\right) + (k-1)\ell \\
                     & \qquad \qquad \mbox{since $\ell$ is odd} \\
                     & = \frac{5}{2} k\ell - \frac{k}{2} - \ell \\
                     & \leq \frac{5}{2} k \ell - \frac{\ell}{2} - k,
\end{align*}
since $\ell \geq k$.

For the second case, suppose that $k$ is odd.  In this case we consider decomposing $G_{k,\ell}$ into $\ell$ paths of length $k$, then by the same reasoning as in the first case we have once again that
$$ \isc(G_{k,\ell}) \leq \frac{5}{2} k \ell - \frac{\ell}{2} - k.$$

For the third and final case, suppose that $k$ and $\ell$ are both even.  In this case we decompose the graph into $k$ ``L-shaped'' paths, as illustrated in Figure \ref{fig:L-paths}.  The longest of these paths has length $k + \ell - 1$ and the shortest has length $\ell - k + 1$, with consecutive paths differing in length by $2$.

\begin{figure}
    \centering
    \includegraphics{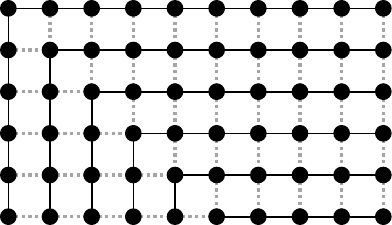}
    \caption{Decomposing $G_{k,\ell}$ into $k$ paths of odd length when $k$ and $\ell$ are both even.}
    \label{fig:L-paths}
\end{figure}

As before, we apply Lemma \ref{lma:del-subg} repeatedly to see that
$$\isc(G_{k,\ell}) \leq \sum_{i = 1}^k \isc(P_{\ell - k + 2i - 1}) + a,$$
where $a$ is the number of edges of $G_{k,\ell}$ not covered by our collection of paths.  Since the collection of paths forms a spanning forest for $G_{k,\ell}$ with $k$ components, it follows that the number of edges \emph{not} covered by this collection is exactly
$$|E(G_{k,\ell})| - (k \ell - k) = k(\ell-1) + \ell(k-1) - k \ell + k = k \ell - \ell.$$
Thus
\begin{align*}
    \isc(G_{k,\ell}) & \leq \sum_{i = 1}^k \left \lfloor \frac{3(\ell - k + 2i - 1)}{2} \right \rfloor + k \ell - \ell \\
                     & = \sum_{i = 1}^k \left( \frac{3}{2}\ell - \frac{3}{2}k + 3i - 2 \right) + k \ell - \ell \\
                     & \qquad \qquad \mbox{since $k$, $\ell$ even} \\
                     & = \frac{3}{2} k \ell - \frac{3}{2}k^2 - 2k + k \ell - \ell + 3\sum_{i=1}^k i \\
                     & = \frac{5}{2} k \ell - \frac{3}{2}k^2 - 2k - \ell + \frac{3}{2}k(k+1) \\
                     & = \frac{5}{2}k \ell - \ell - \frac{k}{2} \\
                     & \leq \frac{5}{2} k \ell - \frac{\ell}{2} - k,
\end{align*}
as required.
\end{proof}

\section{Graphs that may or may not be sc-greedy}
\label{sec:maybe-greedy}

In this section we see that we can prove Conjecture \ref{conj:all-non-complete} for certain graph classes where very little is known about the sum choice number, in particular when it is not even known whether the graphs in question are sc-greedy.

A \emph{good $2$-degenerate} graph is a graph that admits an ordering $\mathcal{O}=(v_1,\ldots,v_n)$ of the vertices so that each vertex has at most $2$ neighbours later in the order and only has two if it belongs to a cycle whose vertices lie later in the order. Let $q(G)$ be the number of vertices that belong to a cycle whose vertices lie later in the order.  Cacti (graphs in which no two cycles share an edge) are examples of good $2$-degenerate graphs.

\begin{thm}\label{thm:2deg}
Any good $2$-degenerate graph $G$ on $n$ vertices satisfies $\isc(G) \leq \frac{3(n + q(G))}2$.
\end{thm}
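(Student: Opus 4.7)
The plan is to argue by induction on $n$. By Observation~\ref{obs:firstchoices}, it suffices to exhibit, for every initial colouring $\alpha : V(G) \to \mathbb{N}$, an $\alpha$-strategy of length at most $\frac{n + 3q(G)}{2}$. We peel the first vertex $v_1$ of the ordering $\mathcal{O}$ and split according to its forward degree $d_1 \in \{0, 1, 2\}$.

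If $d_1 = 0$, connectedness forces $n = 1$ and the statement is trivial. If $d_1 = 2$, the vertex $v_1$ is a cycle-starter (hence counted in $q(G)$) and $G' = G - v_1$ is good $2$-degenerate under the inherited ordering with $q(G') = q(G) - 1$; if $G'$ is disconnected, Remark~\ref{rmk:disj-union} is applied componentwise. By induction, Alice has an $\alpha$-strategy for $G'$ of length at most $\frac{n + 3q(G) - 4}{2}$, and once $G'$ has been $L$-coloured, at most two further requests on $v_1$ secure a colour for $v_1$ distinct from the two already-determined colours of its forward neighbours, giving a total of $\frac{n + 3q(G)}{2}$.

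The delicate case is $d_1 = 1$, where $v_1$ is a leaf with unique forward neighbour $v_j$. Naively running an $\alpha$-strategy for $G - v_1$ and then paying one request on $v_1$ yields an $\alpha$-strategy of length $\lfloor \frac{n-1+3q(G)}{2} \rfloor + 1$, which matches the target when $n + q(G)$ is even but exceeds it by $1$ when $n + q(G)$ is odd. To save the missing unit in the odd-parity case we borrow the two-vertex peel from the proof of Theorem~\ref{th:scgreeedy-linear}. Assuming $v_j$ has a further neighbour $v_k$ (else $G = K_2$ and the statement is immediate), the triple $\{v_1, v_j, v_k\}$ induces a $P_3$ with $v_1, v_k$ as the end-vertices. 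According to the restriction of $\alpha$ to this triple, Alice peels either the non-adjacent pair $\{v_1, v_k\}$ (when $\alpha$ is constant on the triple, so both vertices can safely take the common colour) or the adjacent pair $\{v_1, v_j\}$ (when $\alpha(v_1) \neq \alpha(v_j)$, so both can take their distinct $\alpha$-values). In either subcase Corollary~\ref{cor:allnotc} is invoked on the strategy for the reduced graph to forbid the relevant colour on the neighbours of the peeled pair, with the common neighbour $v_j$ in the first subcase counted only once thanks to $v_j \in N(v_1) \cap N(v_k)$.

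The main obstacle is two-fold: one must control the Corollary~\ref{cor:allnotc} overhead (it has to be at most $1$ to match the unit of slack gained by the two-vertex peel), and one must verify that the reduced graph $G - \{v_1, v_j\}$ or $G - \{v_1, v_k\}$ is still good $2$-degenerate with a value of $q$ close enough to $q(G)$ for the inductive arithmetic to close at the target $\frac{n + 3q(G)}{2}$. A judicious choice of $v_k$ (for instance, a leaf-neighbour of $v_j$ whenever one exists) keeps the Corollary~\ref{cor:allnotc} cost to $1$; when $v_j$ has only non-leaf neighbours, an alternative strategic argument in the spirit of Lemma~\ref{lma:P_3} is needed, possibly after exhibiting an alternative cycle-witness or a re-ordering of the reduced graph to preserve the good $2$-degenerate structure after the removal of an internal vertex.
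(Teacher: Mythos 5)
Your reduction via Observation~\ref{obs:firstchoices}, the case $d_1=0$, the case $d_1=2$ (where $q$ drops by one under the inherited ordering, buying the two extra requests on $v_1$), and the even-parity leaf case are all fine. But the heart of the theorem is exactly the case you leave open: $v_1$ a leaf with neighbour $v_j$ and $n+q(G)$ odd, and there the borrowed two-vertex peel from Theorem~\ref{th:scgreeedy-linear} does not transfer. In that theorem the Corollary~\ref{cor:allnotc} overhead $|E(\{x,y\},V(G'))|$ is affordable because the target bound is $|E(G)|-\frac{n-\omega}{2}$ and drops by precisely the number of deleted edges; here the target $\frac{3(n+q)}{2}$ drops only by $3$ per deleted vertex (plus $\frac32$ per unit drop of $q$), so peeling the adjacent pair $\{v_1,v_j\}$ and fixing $\alpha(v_j)$ on $v_j$ costs $\deg(v_j)-1$ forbidden pairs, which is unaffordable whenever $\deg(v_j)\geq 3$; your ``judicious choice of $v_k$'' only tames the non-adjacent peel $\{v_1,v_k\}$ (two leaves at $v_j$), not this one. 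In addition your case split on $\alpha$ restricted to the triple is not exhaustive (the subcase $\alpha(v_1)=\alpha(v_j)\neq\alpha(v_k)$ is uncovered), and deleting the internal vertex $v_j$ can invalidate the good $2$-degenerate witness: an earlier vertex may keep two forward neighbours while losing the later cycle that justified them, so the inductive hypothesis need not apply to $G-\{v_1,v_j\}$ with a controlled value of $q$. You name both obstacles but resolve neither, so as it stands the proposal is an incomplete sketch rather than a proof.

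For comparison, the paper sidesteps the leaf case entirely by arguing on a minimal counterexample: it first shows (by exactly the kind of $P_3$-peels you describe, which are cheap there) that in a minimal counterexample no vertex has two leaf-neighbours and no degree-$2$ vertex has a leaf-neighbour, and then performs the reduction not at an arbitrary leaf but at the first vertex $v_i$ of the ordering that lies on a cycle of $G_i$, together with its at most one pendant leaf $x$. At that vertex every incident edge other than the two cycle edges is a cut-edge ending in a leaf, so the extra-request/forbidding cost is a constant, and $q$ decreases by one, supplying the slack of $\frac32$ (respectively $\frac92$ when both $v_i$ and $x$ are removed) that your parity count shows is missing at a generic leaf. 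Some argument of this structural kind --- gaining the slack from the drop in $q$, or a prior cleaning step ruling out the bad leaf configurations --- is what your $d_1=1$ case still needs.
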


\begin{proof}
By contradiction. Take a minimal counter-example $G$, consider $n$ its number of vertices and let $q=q(G)$. By Theorem~\ref{thm:trees}, we have $q(G) \geq 1$. Consider an ordering $\mathcal{O}=(v_1,\ldots,v_n)$ of the vertices so that each vertex has at most $2$ neighbours later in the order and only has two if it belongs to a cycle whose vertices lie later in the order. For every $i$, let $G_i$ be the subgraph induced in $G$ by the vertices $(v_i,v_{i+1},\ldots,v_n)$. Let $\alpha$ be a function of first choices such that $G$ has no $\alpha$-strategy of length $q+\frac{n+q}2$. Re-using the arguments for trees, every vertex in $G$ is adjacent to at most one leaf, and no vertex of degree $2$ is adjacent to a leaf. In particular, every cut-edge $e$ is either adjacent to a vertex of degree $1$ or separates $G$ in two parts each of which contains at least one cycle. Let $i$ be the smallest integer such that $v_i$ belongs to a cycle in $G_i$. By choice of our ordering, the integer $i$ is also the smallest such that $v_i$ is of degree $2$ in $G_i$. Let $u$ and $w$ be the two neighbours of $v_i$ in $G_i$. Note that since no vertex not in $G_i$ belongs to a cycle, every edge incident to $v_i$ that is not $uv_i$ nor $wv_i$ is a cut-edge, and is therefore incident to a leaf. Remember that no vertex is adjacent to more than one leaf, and let $x$ be the possible leaf $v_i$ is adjacent to. We have $N(v_i) \subseteq \{u,w,x\}$.

We first assume that $x$ exists. We note that $G'=G-\{v_i,x\}$ admits a strategy of length $\frac{3((n-2) + (q(G)-1))}2=\frac{3(n + q(G))}2-\frac92$, by minimality of $G$. If $\alpha(v_i)\neq \alpha(x)$, we run the strategy on $G'$ using Lemma~\ref{lma:notc} to enforce that $u$ and $w$ should receive a colour distinct from $\alpha(v_i)$: this costs only 2 extra requests. In total, the strategy ran in at most $\frac{3(n + q(G))}2-\frac92+2+2$ rounds, a contradiction with the choice of $G$. If $\alpha(v_i)=\alpha(x)$, we run the strategy on $G'$, and consider two cases depending on whether one of $u$ and $w$ is coloured with $\alpha(v_i)$: if so, we request two new colours for $v_i$. If not, we request one new colour for $x$. All in all, the strategy ran in at most $\frac{3(n + q(G))}2-\frac92+2+2$ rounds, a contradiction with the choice of $G$.

Let us now deal with the case where $x$ does not exist. We note that $G'=G-\{v_i\}$ admits a strategy of length $\frac{3((n-1) + (q(G)-1))}2=\frac{3(n + q(G))}2-3$, by minimality of $G$. We run the strategy on $G'$, then request two new colours on $v_i$: since $d(v_i)=2$ this is enough to guarantee the proper colouring of $G'$ can be extended to $G$. the strategy ran in at most $\frac{3(n + q(G))}2-3+1+2$ rounds, a contradiction with the choice of $G$.
\end{proof}

Combining this result with Lemma \ref{lma:non-treesarelosers}, we obtain the following immediate corollary.

\begin{cor}
If $G$ is a good $2$-degenerate graph with $q(G) < \frac{n}{3}$, then $\isc(G) < \xsc(G)$.
\end{cor}

\section{Conclusions and open problems}

We have introduced the interactive sum choice number of graphs, a variation of the sum choice number in which we are able to exploit partial information about the contents of colour lists in order to inform our strategy.  We demonstrated that in many cases this additional information allows us to guarantee a proper list colouring when the sum of list lengths over all the vertices is strictly smaller than the sum choice number of the graph, and for several families of graphs we were in fact able to prove the existence of a large gap between the sum choice number and the interactive sum choice number.

As is often the case when a new problem is introduced, this paper raises more questions than it solves. The key open question arising from this work is to prove Conjecture \ref{conj:all-non-complete}, namely that if $G$ is not a complete graph then $\isc(G) < \xsc(G)$; a first step would be to attempt to prove the conjecture for further graph classes, for example $k$-degenerate graphs, chordal graphs, planar graphs, cographs or graphs of bounded treewidth. Since graphs with high degeneracy are known to have high choice number~\cite{Nogaaaa} with a proof that only really uses arguments around one arbitrary vertex, it might be worth trying to prove similarly that they have (very) high sum choice number. In turn, that would be a step towards Conjecture~\ref{conj:all-non-complete} for graphs with high degeneracy.

It would also be interesting to investigate further just how much these two quantities can differ; in particular, the upper bounds on the interactive sum choice number that we have obtained for unbalanced bipartite graphs and grids are unlikely to be tight, so it seems natural to seek better bounds for these graph classes.  A natural next step is to attempt to find further classes of graphs for which the difference between the sum choice number and the interactive sum choice number is a growing function of the number of vertices. On the other hand, what can we say about the structure of graphs for which the difference between the sum choice number and interactive sum choice number is bounded by some constant independent of the number of vertices?

In addressing any of these questions, it would be extremely helpful to understand how to use cut-edges, cut-vertices, modules, joins, and similar decompositions of graphs. Also, tools to prove lower bounds on the interactive sum choice number are sorely missing.

\section*{Acknowledgements}

Kitty Meeks is supported by a Royal Society of Edinburgh Personal Research Fellowship, funded by the Scottish Government.   Both authors would like to thank Ond\v{r}ei Splichal and Jakub L\"{o}wit for pointing out an error in a preliminary version of this paper.

\bibliographystyle{plain}
\bibliography{sumchoice}

\end{document}